\newcommand{\set}[1]{\left\{#1\right\}}
\newcommand{\pr}[1]{\left(#1\right)}
\newcommand{\fpr}[1]{\mathopen{}\left(#1\right)}
\newcommand{\spr}[1]{\left[#1\right]}
\newcommand{\abs}[1]{{\left|#1\right|}}
\newcommand{\np}{\textbf{NP}}
\newcommand{\funcdef}[3]{{#1}:{#2} \to {#3}}
\newcommand{\define}{\leftarrow}
\newcommand{\fm}[1]{{\mathcal{#1}}}
\DeclareRobustCommand{\dispfunc}[2]{%
	\ensuremath{%
		\ifthenelse{\equal{#2}{}}%
			{\mathit{#1}}%
			{\mathit{#1}\fpr{#2}}}}
\newcommand{\lhood}[1]{\dispfunc{\ell}{#1}}
\newcommand{\bigO}[1]{\dispfunc{\mathcal{O}}{#1}}
\newcommand{\duration}[1]{\dispfunc{\Delta}{#1}}
\newcommand{\cnt}[1]{\dispfunc{c}{#1}}
\newcommand{\dtname}[1]{\textsl{#1}}
\newcommand{\findgroups}{\textsc{FindGroups}\xspace}
\newcommand{\findsegments}{\textsc{FindSegments}\xspace}
\newcommand{\findlambda}{\textsc{UpdateLambda}\xspace}
\definecolor{yafaxiscolor}{rgb}{0.3, 0.3, 0.3}
\definecolor{yafcolor1}{rgb}{0.4, 0.165, 0.553}
\definecolor{yafcolor2}{rgb}{0.949, 0.482, 0.216}
\definecolor{yafcolor3}{rgb}{0.47, 0.549, 0.306}
\definecolor{yafcolor4}{rgb}{0.925, 0.165, 0.224}
\definecolor{yafcolor5}{rgb}{0.141, 0.345, 0.643}
\definecolor{yafcolor6}{rgb}{0.965, 0.933, 0.267}
\definecolor{yafcolor7}{rgb}{0.627, 0.118, 0.165}
\definecolor{yafcolor8}{rgb}{0.878, 0.475, 0.686}
\newlength{\yafaxispad}
\newlength{\yaftlpad}
\newlength{\yaflabelpad}
\newlength{\yafaxiswidth}
\newlength{\yafticklen}
\def\pgfplots@drawtickgridlines@INSTALLCLIP@onorientedsurf#1{}
\newcommand{\yafdrawaxis}[4]{
	\pgfplotstransformcoordinatex{#1}\let\xmincoord=\pgfmathresult 
	\pgfplotstransformcoordinatex{#2}\let\xmaxcoord=\pgfmathresult 
	\pgfplotstransformcoordinatey{#3}\let\ymincoord=\pgfmathresult 
	\pgfplotstransformcoordinatey{#4}\let\ymaxcoord=\pgfmathresult 
	\pgfsetlinewidth{\yafaxiswidth} 
	\pgfsetcolor{yafaxiscolor}
	\pgfpathmoveto{\pgfpointadd{\pgfpointadd{\pgfplotspointrelaxisxy{0}{0}}{\pgfqpointxy{\xmincoord}{0}}}{\pgfqpoint{-0.5\yafaxiswidth}{\yafaxispad}}}
	\pgfpathlineto{\pgfpointadd{\pgfpointadd{\pgfplotspointrelaxisxy{0}{0}}{\pgfqpointxy{\xmaxcoord}{0}}}{\pgfqpoint{0.5\yafaxiswidth}{\yafaxispad}}}
	\pgfpathmoveto{\pgfpointadd{\pgfpointadd{\pgfplotspointrelaxisxy{0}{0}}{\pgfqpointxy{0}{\ymincoord}}}{\pgfqpoint{\yafaxispad}{-0.5\yafaxiswidth}}}
	\pgfpathlineto{\pgfpointadd{\pgfpointadd{\pgfplotspointrelaxisxy{0}{0}}{\pgfqpointxy{0}{\ymaxcoord}}}{\pgfqpoint{\yafaxispad}{0.5\yafaxiswidth}}}
	\pgfusepath{stroke}
}
\pgfplotsset{axis y line=left, axis x line=bottom,
	tick align=outside,
	tickwidth=\yafticklen,
	clip = false,
    x axis line style= {-, line width = 0pt, color=black!0},
    y axis line style= {-, line width = 0pt, color=black!0},
    x tick style= {line width = \yafaxiswidth, color=yafaxiscolor, yshift = \yafaxispad},
    y tick style= {line width = \yafaxiswidth, color=yafaxiscolor, xshift = \yafaxispad},
    x tick label style = {font=\small, yshift = \yaftlpad, inner xsep = 0pt},
    y tick label style = {font=\small, xshift = \yaftlpad},
    every axis y label/.style = {at = {(ticklabel cs:0.5)}, rotate=90, anchor=center, font=\small, yshift = -\yaflabelpad, inner sep = 0pt},
    every axis x label/.style = {at = {(ticklabel cs:0.5)}, anchor=center, font=\small, yshift = \yaflabelpad},
    x tick label style = {font=\small, yshift = 1pt},
    grid = major,
    major grid style  = {dash pattern = on 1pt off 3 pt},
	every axis plot post/.append style= {line width=\yafaxiswidth} ,
	legend cell align = left,
	legend style = {inner sep = 1pt, cells = {font=\scriptsize}},
	legend image code/.code={%
		\draw[mark repeat=2,mark phase=2,#1] 
		plot coordinates { (0cm,0cm) (0.15cm,0cm) (0.3cm,0cm) };%
	} 
}
\newcommand{\pgfprintduration}[1]{%
	\ifthenelse{\equal{#1}{}}{---}{%
	\pgfmathsetmacro{\minutes}{floor(#1 / 60)}%
	\pgfmathsetmacro{\seconds}{#1 - 60*\minutes}%
	\pgfmathifthenelse{\minutes > 0}{"\pgfmathprintnumber{\minutes}m \pgfmathprintnumber[fixed,precision=0]{\seconds}s"}{"\pgfmathprintnumber{\seconds}s"}\pgfmathresult}}
\begin{document}

\title{Recurrent segmentation meets block models in temporal networks}

\author{Chamalee Wickrama Arachchi\inst{1} \and Nikolaj Tatti\inst{1}}
\authorrunning{C. W. Arachchi and N. Tatti}
%
\institute{HIIT, University of Helsinki, Finland, \email{firstname.lastname@helsinki.fi}}


\maketitle
\begin{abstract}
A popular approach to model interactions is to represent them as a network with
nodes being the agents and the interactions being the edges. Interactions are
often timestamped, which leads to having timestamped edges.
Many real-world temporal networks have a recurrent or possibly cyclic behaviour. For
example, social network activity may be heightened during certain hours of day.
In this paper,
our main interest is to model recurrent activity in such temporal networks. As a starting point
we use stochastic block model, a popular choice for modelling static networks, where nodes
are split into $R$ groups.
We extend this model to temporal networks by modelling the edges with a Poisson process.
We make the parameters of the process dependent on time by segmenting the time line into $K$
segments. To enforce the recurring activity we require that only $H < K$ different set of parameters
can be used, that is, several, not necessarily consecutive, segments must share their parameters.
We prove that the searching for optimal blocks and segmentation is an \np-hard problem. Consequently,
we split the problem into 3 subproblems where we optimize blocks, model parameters, and segmentation
in turn while keeping the remaining structures fixed. We propose an iterative algorithm that requires
$\bigO{KHm + Rn + R^2H}$ time per iteration, where $n$ and $m$ are the number of nodes and edges in the network.
We demonstrate experimentally that the number of required iterations is typically low, the algorithm is able 
to discover the ground truth from synthetic datasets, and show that certain real-world networks
exhibit recurrent behaviour as the likelihood does not deteriorate when $H$ is lowered.
\end{abstract}

\section{Introduction}

A popular approach to model interactions between set of agents is to represent
them as a network with nodes being the agents and the interactions being the
edges. Naturally, many interactions in real-world datasets have a timestamp, in which
case the edges in networks also have timestamps. Consequently, developing
methdology for temporal networks has gained attention in data
mining literature.

Many temporal phenomena have recurrent or possibly cyclic behaviour.  For
example, social network activity may be heightened during certain hours of day.
Our main interest is to model recurrent activity in temporal networks. As a starting point
we use stochastic block model, a popular choice for modelling static networks. We can immediately
extend this model to temporal networks, for example, by modelling the edges with a Poisson process.  
Furthermore, \citet{corneli2018multiple} modelled the network by also segmenting the timeline and modelled each segment
with a separate Poisson process. 

To model the recurrent activity we can either model it explicitly, for example,
by modelling explicitly cyclic activity, or we can use more flexible approach
where we look for segmentation but restrict the number of distinct parameters.
Such notion was proposed by~\citet{gionis2003finding} in the context of segmenting
sequences of real valued vectors.

In this paper we extend the model proposed by \citet{corneli2018multiple} using the ideas
proposed by \citet{gionis2003finding}.
More formally, we consider the following problem: given a temporal graph with $n$ nodes and $m$
edges, we are looking to partition the nodes into $R$ groups and segment the timeline into $K$
segments that are grouped into $H$ levels. Note that a single level may contain non-consecutive  
segments.
An edge $e = (u, v)$ is then modelled with a Poisson process with a parameter $\lambda_{ijh}$,
where $i$ and $j$ are the groups of $u$ and $v$, and $h$ is the level of the segment containing $e$.

To obtain good solutions we rely on an iterative method by splitting
the problem into three subproblems: ($i$) optimize blocks while keeping the remaining parameters fixed,
($ii$) optimize model parameters $\Lambda$ while keeping the blocks and the segmentation fixed,
($iii$) optimize the segmentation while keeping the remaining parameters fixed.
We approach the first subproblem by iteratively optimizing block assignment of
each node while maintaining the remaining nodes fixed. We show that such single round 
can be done in $\bigO{m + Rn + R^2H + K}$ time, where $n$ is the number of nodes and $m$ is the number of edges.
Fortunately, the second subproblem is trivial since there is an analytic solution
for optimal parameters, and we can obtain the solution in $\bigO{m + R^2H + K}$ time.
Finally, we show that we can find the optimal segmentation with a dynamic
program.  Using a stock dynamic program leads to a computatonal complexity of $\bigO{m^2KH}$.
Fortunately, we show that we can speed up the computation by using a SMAWK
algorithm~\citep{aggarwal1987geometric}, leading to a computational complexity
of  $\bigO{mKH + HR^2}$.

In summary, we extend a model by~\citet{corneli2018multiple} to have recurring
segments. We prove that the main problem is \np-hard as well as
several related optimization problems where we fix a subset of parameters.
Navigating around these \np-hard problems we propose an iterative algorithm
where a single iteration requires $\bigO{KHm + Rn + R^2H}$ time, a linear time in edges and nodes.

The rest of the paper is organized as follows.
First we introduce preliminary notation, the model, and the optimization problem
in Section~\ref{sec:prel}. We then proceed to describe the iterative algorithm
in Section~\ref{sec:algorithm}. We present the related work in Section~\ref{sec:related}.
Finally, we present our experiments in Section~\ref{sec:exp} and conclude the paper
with discussion in Section~\ref{sec:conclusions}.
The proofs are provided in Appendix\footnote{\url{XXX}}.

\section{Preliminary notation and problem definition}\label{sec:prel}

Assume a \emph{temporal graph} $G = (V, E)$, where $V$ is a set of nodes
and $E$ is a set of edges, where each edge is tuple $(u, v, t$)  with $u,v \in V$
and $t$ being the timestamp. We will use $n = \abs{V}$ to denote the number of nodes
and $m = \abs{E}$ the number of edges. For simplicity, we assume that we
do not have self-loops, though the models can be adjusted for such case.
We write $t(e)$ to mean the timestamp of the edge $e$. We also write $N(u)$
to denote all the edges adjacent to a node $u \in V$.

Perhaps the simplest way to model a graph (with no temporal information) is
with Erdos-Renyi model, where each edge is sampled independently from a Bernoulli
probability parameterized with $q$.
Let us consider two natural extensions of this model. The first extension is
a block model, where nodes are divided into $k$ blocks, and an edge $(u, v)$ are modelled
with a Bernoulli probability parameterized with $q_{ij}$, where $i$ is the block of $u$ and
$j$ is the block of $v$. Given a graph, the optimization problem is to cluster nodes into blocks
so that the likelihood of the model is optimized. For the sake of variability we will use
the words \emph{block} and \emph{group} interchangeably.

A convenient way of modelling events in temporal data is using Poisson process:
Assume that you have observed $c$ events with timestamps $t_1, \ldots, t_c$ in a time interval $T$ of length $\Delta$.
The log-likelihood of observing these events at these exact times is equal to $c\log \lambda - \lambda \Delta$,
where $\lambda$ is a model parameter. Note that the log-likelihood does not depend on the individual timestamps. 

If we were to extend the block model to temporal networks, the log-likelihood of
$c$ edges occurring between the nodes $u$ and $v$ in a time interval is equal to
	$c\log \lambda_{ij} - \lambda_{ij} \Delta$,
where $\lambda_{ij}$ is the Poisson process parameter and $i$ is the block of $u$ and
$j$ is the block of $v$. Note that $\lambda_{ij}$ does not depend on the time, so 
discovering optimal blocks is very similar to discovering blocks in a static model.

A natural extension of this model, proposed by~\citet{corneli2018multiple}, is to make the parameters depend on time.
Here, we partition the model into $k$ segments and assign different set of $\lambda$s to each segment.

More formally, we define a \emph{time interval} $T$ to be a continuous
interval either containing the starting point $T = [t_1, t_2]$ or excluding the
starting point $T = (t_1, t_2]$. In both cases, we define the duration as
$\duration{T} = t_2 - t_1$.

Given a time interval $T$, let us define 
\[
	\cnt{u, v, T} = \abs{\set{e = (u, v, t) \in E \mid t \in T}}
\]
to be the number of edges between $u$ and $v$ in $T$.

The log-likelihood of Poisson model
for nodes $u$,
$v$ and a time interval $T$ is 
\[
	\lhood{u, v, T, \lambda} = \cnt{u, v, T} \log \lambda - \lambda \duration{T}\quad.
\]
We extend the log-likelihood between the two sets of nodes $U$ and $W$, by writing
\[
	\lhood{U, W, T, \lambda} = \sum_{u, w \in  U \times W} \lhood{u, w, T, \lambda},
\]
where $U \times W$ is a set of all node pairs $\set{u, w}$ with $u \in U$ and $w \in W$ and $u \neq v$.
We consider $\set{u, w}$ and $\set{w, u}$ the same, so only one of these pairs is visited.

Given a time interval $D = [a, b]$,
a $K$-segmentation $\mathcal{T} = T_1, \ldots, T_K$ is a sequence of $K$ time
intervals, such that $T_1 = [a, t_1], T_2 = (t_1, t_2], \ldots T_i = (t_{i - 1}, t_{i}], \ldots$, and
$T_{K} = (t_{K - 1}, b]$.
For notational simplicity, we require that the boundaries $t_i$
must collide with the timestamps of individual edges. We also assume that $D$ covers the edges.
If $D$ is not specified, then it is set to be the smallest interval covering the edges.

Given a $K$-segmentation, a partition of nodes $\mathcal{P} = P_1, \ldots, P_R$
into $R$ groups, and a set of $KR(R + 1)/2$ parameters $\Lambda = \set{\lambda_{ijk}}$\footnote{For notational simplicity we will equate $\lambda_{ijh}$ and $\lambda_{jih}$.},
the log-likelihood is equal to
\[
	\lhood{\mathcal{P}, \mathcal{T}, \Lambda} = \sum_{i = 1}^R \sum_{j = i}^R \sum_{k = 1}^K \lhood{P_i, P_j, T_k, \lambda_{ijk}}\quad.
\]

This leads immediately to the problem considered by~\citet{corneli2018multiple}.
\begin{problem}[$(K, R)$ model]
Given a temporal graph $G$, a time interval $D$, integers $R$ and $K$, find a node partition with $R$ groups, 
a $K$-segmentation, and a set of parameters $\Lambda$ so that $\lhood{\mathcal{P}, \mathcal{T}, \Lambda}$
is maximized.
\end{problem}

We should point out that for fixed $\mathcal{P}$ and $\mathcal{T}$, the optimal $\Lambda$
is equal to
\[
	\lambda_{ijk} = \frac{\cnt{P_i, P_j, T_k}}{\abs{P_i \times P_j} \duration{T_k}}\quad.
\]

In this paper we consider an extension of $(K, R)$ model. Many temporal network exhibit cyclic or repeating behaviour.
Here, we allow network to have $K$ segments but we also limit the number of distinct parameters to be at most $H \leq K$.
In other words, we are forcing that certain segments \emph{share} their parameters. We do not know beforehand which segments should
share the parameters.

We can express this constraint more formally by introducing a mapping $\funcdef{g}{\spr{K}}{\spr{H}}$ that
maps a segment index to its matching parameters. We can now define the likelihood as follows:
given a $K$-segmentation, a partition of nodes $\mathcal{P} = P_1, \ldots, P_R$
into $R$ groups, a mapping $\funcdef{g}{\spr{K}}{\spr{H}}$,
and a set of $HR(R + 1)/2$ parameters $\Lambda = \set{\lambda_{ijh}}$,
the log-likelihood is equal to
\[
	\lhood{\mathcal{P}, \mathcal{T}, g, \Lambda} = \sum_{i = 1}^R \sum_{j = i}^R \sum_{k = 1}^K \lhood{P_i, P_j, T_k, \lambda_{ijg(k)}}\quad.
\]
We will refer to $g$ as \emph{level mapping}.

This leads to the following optimization problem.
\begin{problem}[$(K, H, R)$ model]
\label{prob:khr}
Given a temporal graph $G$, a time interval $D$, integers $R$, $H$, and $K$, find a node partition with $R$ groups, 
a $K$-segmentation, a level mapping $\funcdef{g}{\spr{K}}{\spr{H}}$,
and parameters $\Lambda$ maximizing $\lhood{\mathcal{P}, \mathcal{T}, g, \Lambda}$.
\end{problem}

\section{Fast algorithm for obtaining good model}\label{sec:algorithm}

In this section we will introduce an iterative, fast approach for obtaining a good model.
The computational complexity of one iteration is $\bigO{KHm + Rn + R^2H}$,
which is linear in both the nodes and edges.

\subsection{Iterative approach}

Unfortunately, finding optimal solution for our problem is \np-hard.
\begin{proposition}
\label{prop:nphard1}
Problem~\ref{prob:khr}
is \np-hard, even for $H = K = 1$ and $R = 2$.
\end{proposition}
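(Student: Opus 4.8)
The plan is to reduce from \textsc{Max-Cut}. The first, easy, observation is that when $H = K = 1$ there is essentially nothing to choose besides the node partition: the $1$-segmentation is forced to be the single interval $T_1 = D$ (of some fixed duration $\Delta$), the level mapping $\funcdef{g}{\spr 1}{\spr 1}$ is trivial, and for a fixed $2$-partition $\mathcal{P} = (P_1, P_2)$ the optimal $\Lambda$ has the closed form stated before Problem~\ref{prob:khr}. Substituting it, the log-likelihood becomes, up to the partition-independent constant $-W\log\Delta - W$ (with $W$ the total number of timestamped edges),
\[
  \Psi(\mathcal{P}) \;=\; \sum_{1 \le i \le j \le 2} w_{ij}\log\frac{w_{ij}}{p_{ij}}, \qquad w_{ij} = \cnt{P_i, P_j, D}, \quad p_{ij} = \abs{P_i \times P_j},
\]
with the convention $0\log 0 = 0$, so it suffices to show that deciding whether $\Psi$ can exceed a given threshold is \np-hard. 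A useful point is that $\cnt{u, v, D}$ may be much larger than one, since many edges can join the same pair of nodes at distinct timestamps; thus the instances of this subproblem are exactly the graphs with arbitrary nonnegative integer edge weights, and we may use that freedom.

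Given a \textsc{Max-Cut} instance, a graph $H$ on $N$ nodes with $M$ edges and an integer target $t$, I would build $G$ on the nodes of $H$ together with a small, constant number of auxiliary nodes, placing $\nu$ parallel edges on every pair of nodes (a heavy uniform ``background'') and $\mu$ extra parallel edges on every pair that is an edge of $H$, for integers $\nu \gg \mu \gg 1$ that are polynomially bounded in the input size and are fixed in the analysis. Writing $m_{ij}$ for the number of edges of $H$ with endpoints in blocks $i$ and $j$ (so $m_{12}$ is the cut of $H$ induced by $\mathcal{P}$ and $m_{11} + m_{12} + m_{22} = M$), we have $w_{ij} = \nu p_{ij} + \mu m_{ij}$, hence
\[
  \Psi(\mathcal{P}) \;=\; \sum_{i \le j}\bigl(\nu p_{ij} + \mu m_{ij}\bigr)\log\Bigl(\nu + \mu\tfrac{m_{ij}}{p_{ij}}\Bigr).
\]
Expanding the logarithm around the large background $\nu$: the leading terms are partition-independent (both $\sum_{ij} p_{ij}$ and $\sum_{ij} m_{ij} = M$ are fixed), and the first partition-dependent residual is a positive multiple of $\sum_{ij} m_{ij}^2 / p_{ij}$, with all higher-order terms negligible once $\nu$ is large enough. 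The auxiliary nodes and a controlled asymmetry of the background would be used to force the optimal partition to have a prescribed, known split of the two block sizes, which freezes the denominators $p_{ij}$; then $\Psi$ is, up to an additive constant, a positive factor, and a vanishing error, a fixed function of the single variable $m_{12}$, from which one reads off a likelihood threshold $\theta$ equivalent to ``$H$ has a cut of size $\ge t$''.

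The main obstacle is that $\Psi$ is not a cut function in disguise: it rewards dense blocks \emph{and} a dense interface between the blocks, and its dependence on the block sizes does not go away by itself; even after the denominators are frozen, the resulting function of $m_{12}$ is quadratic rather than monotone, so the threshold argument needs genuine care about which branch of that quadratic the optimum lives on. Consequently, the bulk of the proof is: (i) the Taylor-type estimate isolating $\sum m_{ij}^2/p_{ij}$ as the dominant partition-dependent term; (ii) the design of the auxiliary gadget that removes the leftover freedom in the block sizes; and (iii) the arithmetic ensuring that $\nu$, $\mu$, and $\theta$ stay polynomially bounded while the residual error stays strictly below the yes/no gap of the source problem. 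If pinning the block sizes proves awkward, a cleaner route is to reduce instead from a size-constrained problem — \textsc{Max-Bisection}, minimum bisection, or a number-partition problem — whose built-in balance constraint already matches the natural balanced optimum of $\Psi$, or, failing that, to give a direct gadget reduction from \textsc{3-Sat}. Throughout, the degenerate situations (a block of size $\le 1$, or an empty block class) are harmless: they only make $\Psi$ smaller via the $0\log 0 = 0$ convention and never occur at the intended optimum.
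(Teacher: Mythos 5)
Your setup is right (with $H=K=1$ the only free object is the partition, and after plugging in the optimal $\Lambda$ the objective is $\sum_{i\le j} w_{ij}\log(w_{ij}/p_{ij})$ up to a constant), and the reduction source (\textsc{MaxCut} with parallel timestamped edges encoding weights) matches the paper. But the central mechanism is left unresolved, and as stated it points the wrong way. Your Taylor expansion correctly isolates $\frac{\mu^2}{2\nu}\sum_{i\le j} m_{ij}^2/p_{ij}$ as the leading partition-dependent term, but with balanced blocks of size $s$ one has $p_{11}=p_{22}\approx s^2/2$ and $p_{12}=s^2$, so the coefficient on $m_{11}^2+m_{22}^2$ is about \emph{twice} that on $m_{12}^2$: maximizing $\Psi$ rewards concentrating the edges of $H$ inside a single block and therefore \emph{anti}-correlates with the cut. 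Moreover, even with the denominators frozen, $\Psi$ is not a function of $m_{12}$ alone --- for a fixed number $M-m_{12}$ of internal edges, $m_{11}^2+m_{22}^2$ still varies with how those edges split between the two blocks, so the "fixed function of the single variable $m_{12}$" claim fails. These are not loose ends of the arithmetic; they are the reason a naive one-copy construction does not reduce \textsc{MaxCut}.

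The paper closes exactly this gap with a different gadget: it takes \emph{two} copies $U,V$ of the \textsc{MaxCut} graph, ties each pair $(u_i,v_i)$ together with $n\alpha$ parallel edges, and adds two large sets $X,Y$ joined by $\alpha$-fold edges. A (long, technical) lemma shows the optimum must put $X$ and $Y$ in opposite blocks and must anti-align the two copies ($u_i\in P_1 \Rightarrow v_i\in P_2$). This forces equal block sizes, makes $m_{11}=m_{22}$ automatically (killing the split ambiguity), guarantees $\lambda_{12}>\lambda_{11}=\lambda_{22}$, and renders the likelihood monotonically increasing in the cut, after which a short sandwich argument ($\ell(\mathcal{P}')\ge\ell(\mathcal{P}',\Lambda)\ge\ell(\mathcal{P})\ge\ell(\mathcal{P}')$) finishes the proof without any Taylor expansion. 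Your fallback suggestion of reducing from minimum bisection is a plausible rescue of your construction (since your objective prefers small balanced cuts), but that reduction is not carried out either; as written, the proposal is missing the idea that actually makes the likelihood order partitions by cut size.
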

Consequently, we resort to a natural heuristic approach,
where we optimize certain parameters while keeping the remaining parameters fixed.

\begin{algorithm}[t!]
\caption{Main loop of the algorithm}
\label{alg:main}
$\mathcal{P} \define$ random groups; \  $\Lambda \define$ random values\;
$\mathcal{T}, g \define \findsegments(\mathcal{P}, \Lambda)$\;
$\Lambda \define \findlambda(\mathcal{P}, \mathcal{T}, g)$\;
\While {convergence} {
	$\mathcal{P} \define \findgroups(\mathcal{P}, \Lambda, \mathcal{T}, g)$\;
	$\Lambda \define \findlambda(\mathcal{P}, \mathcal{T}, g)$\;
	$\mathcal{T}, g \define \findsegments(\mathcal{P}, \Lambda)$\;
	$\Lambda \define \findlambda(\mathcal{P}, \mathcal{T}, g)$\;
}
\end{algorithm}

We split the original problem into 3 subproblems as shown in Algorithm~\ref{alg:main}.
First, we find good groups, then update $\Lambda$, and then optimize segmentation, followed
by yet another update of $\Lambda$.

When initializing, we select groups $\mathcal{P}$ and parameters $\Lambda$ randomly, 
then proceed to find optimal segmentation, followed by opimizing $\Lambda$.

Next we will explain each step in details.

\subsection{Finding groups}

Our first step is to update groups $\mathcal{P}$ while maintaining
the remaining parameters fixed. Unfortunately, finding the optimal solution for
this problem is \np-hard.

\begin{proposition}
\label{prop:nphard2}
Finding optimal partition $\mathcal{P}$ for fixed $\Lambda$, $\mathcal{T}$ and $g$
is \np-hard, even for $H = K = 1$ and $R = 2$.
\end{proposition}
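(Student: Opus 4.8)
The plan is to reduce from a known \np-hard graph partitioning problem. Since we are restricted to $H = K = 1$ and $R = 2$, the time dimension collapses entirely: there is a single segment $T_1 = D$ of some duration $\Delta$, a single level, and we must split $V$ into two groups $P_1, P_2$. The parameters $\Lambda$ consist of just three numbers $\lambda_{11}, \lambda_{12}, \lambda_{22}$, which are \emph{fixed} as part of the input. For a fixed partition, the log-likelihood decomposes into the three terms $\lhood{P_1, P_1, D, \lambda_{11}}$, $\lhood{P_1, P_2, D, \lambda_{12}}$, $\lhood{P_2, P_2, D, \lambda_{22}}$, and each term $\lhood{U, W, D, \lambda}$ equals $\cnt{U, W, D}\log\lambda - \lambda\abs{U\times W}\Delta$. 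So the objective is a linear function of the edge counts within $P_1$, within $P_2$, and across the cut, plus a linear function of the pair counts $\binom{\abs{P_1}}{2}$, $\binom{\abs{P_2}}{2}$, $\abs{P_1}\abs{P_2}$. The freedom to choose $\lambda_{11}, \lambda_{12}, \lambda_{22}$ lets us dial the coefficients on these six quantities.

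The key step is to pick the fixed $\lambda$ values so that maximizing the likelihood forces an extremal cut structure, e.g.\ so that the objective becomes (up to additive/multiplicative constants) the number of edges cut, or the number of non-edges cut, subject to the group sizes. A natural target is to reduce from \textsc{Max-Cut} or, more conveniently given that the size terms are quadratic, from a balanced/constrained variant such as \textsc{Max-Bisection} or \textsc{Graph Bisection}. Concretely, I would set $\lambda_{11} = \lambda_{22}$ so the two "internal" terms combine symmetrically, and choose $\lambda_{12}$ relative to $\lambda_{11}$ so that the $\log\lambda$ coefficients reward placing edges across the cut while the $-\lambda\abs{U\times W}\Delta$ terms penalize unbalanced sizes (or reward them, depending on which problem we target). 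Writing $a = \abs{P_1}$, the size-dependent part is a fixed quadratic in $a$ (since $\abs{P_2} = n - a$), so with the right sign it either pins $a$ to a specific value — giving a bisection constraint "for free" — or is a harmless additive term. The edge-count part then reads off as a positive multiple of the cut size (or its complement), so an optimal partition solves the corresponding \np-hard cut problem.

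The main obstacle is making the two effects — the combinatorial edge term and the deterministic size term — not interfere: I need the size term either to be constant across all partitions of interest (impossible, since it genuinely depends on $a$) or to be separately optimized at exactly the configuration the cut problem wants. The cleanest route is probably to choose parameters so the size term is \emph{strictly} maximized at balanced partitions and dominates any possible gain from the edge term unless sizes are balanced (a scaling argument: blow up the additive penalty), reducing the problem to \textsc{Max-Bisection}; alternatively, one can reduce from a weighted or padded version of \textsc{Max-Cut} where dummy isolated nodes absorb the size term. I would also double-check the degenerate cases $\lambda = 0$ and the requirement $\cnt{U,W,D} > 0$ wherever a $\log\lambda$ term appears, padding the construction with a few extra edges if needed so that every block is guaranteed to contain at least one edge and the log-likelihood is finite at the intended optimum. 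Finally, verifying that the reduction is polynomial and that an optimal likelihood partition maps back to an optimal cut is routine once the parameters are fixed.
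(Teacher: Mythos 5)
You have the right reduction target (\textsc{MaxCut}) and you correctly isolate the one real difficulty: the deterministic term $-\lambda\abs{U\times W}\duration{T}$ depends on the group sizes and can fight the edge-count term. But neither of your proposed resolutions closes the gap. The balanced-cut route is internally inconsistent: with your choice $\lambda_{11}=\lambda_{22}=\alpha$ and $\lambda_{12}=\beta$, the size part of the objective equals $-\Delta\alpha{n \choose 2}-\Delta(\beta-\alpha)a(n-a)$, where $a=\abs{P_1}$ and $\Delta$ is the duration of the single segment. Rewarding cross edges forces $\beta>\alpha$, and then this term is maximized at the most \emph{unbalanced} partitions ($a\in\{0,n\}$, i.e.\ an empty cut), so scaling it up to dominate destroys the reduction rather than enforcing a bisection. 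The consistent variant ($\beta<\alpha$ with the size term dominating, reducing from \textsc{Minimum Bisection}) could be made to work, but you leave it at ``probably'' without the domination estimate. The dummy-node padding idea does not help either, since isolated nodes still contribute to $\abs{P_i\times P_j}$ and hence to the size term.

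The observation you are missing is that the duration is under the reducer's control. The paper places all edges of the \textsc{MaxCut} instance at a single timestamp $t$ and takes the degenerate segment $[t,t]$, so $\duration{T}=0$ and every size term vanishes identically. With $\alpha<\beta$ the likelihood collapses to $x\log\alpha+(m-x)\log\beta$, where $x$ is the number of intra-group edges, and maximizing it is exactly \textsc{MaxCut}. (A nonzero but sufficiently small duration, say with $\Delta\beta{n \choose 2}<\log(\beta/\alpha)$ so that the size term can never outweigh a single edge's contribution, would serve equally well if one objects to zero-length intervals.) Your decomposition of the objective and your reduction source match the paper's; what is missing is this one step that neutralizes the size term, and without it the argument as you outline it does not go through.
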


Due to the previous proposition, we perform a simple greedy optimization where
each node is individually reassigned to the optimal group while maintaining 
the remaining nodes fixed.

We should point out that there are more sophisticated approaches, for example
based on SDP relaxations, see a survey by~\citet{abbe2017community}. However, we resort to a simple
greedy optimization due to its speed.

A naive implementation of computing the log-likelihood gain for a single node may require
$\Theta(m)$ steps, which would lead in $\Theta(nm)$ time as we need to test every node.
Luckily,
we can speed-up the computation using the following straightforward proposition.

\begin{proposition}
\label{prop:gain}
Let $\mathcal{P}$ be the partition of nodes, $\Lambda$ set of parameters, and
$\mathcal{T}$ and $g$ the segmentation and the level mapping.
Let $\mathcal{S}_h = \set{T_k \in \mathcal{T} \mid h = g(k)}$ be the segments
using the $h$th level.

Let $u$ be a node, and let $P_b$ be the set such that $u \in P_b$.
Select $P_a$, and let $\mathcal{P}'$ be the partition where $u$ has been moved
from $P_b$ to $P_a$. Then
\[
	\lhood{\mathcal{P}', \mathcal{T}, g, \Lambda} - \lhood{\mathcal{P}, \mathcal{T}, g, \Lambda} 
	=  Z + \sum_{h = 1}^H \lambda_{bah} t_h + \sum_{j = 1}^R c_{jh} \log \lambda_{ajh} - \abs{P_j}\lambda_{ajh} t_h,
\]
where $Z$ is a constant, not depending on $a$, $t_h = \duration{\mathcal{S}_h}$ is the total duration of the segments using the $h$th level
and 
	$c_{jh} = \cnt{u, P_j, \mathcal{S}_h}$,
is the number of edges between $u$ and $P_j$ in the segments using the $h$th level.
\end{proposition}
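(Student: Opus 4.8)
The plan is to expand both log-likelihoods according to the definition of $\lhood{\mathcal{P}, \mathcal{T}, g, \Lambda}$ and cancel every term that does not involve the node $u$ being moved. First I would split the triple sum $\sum_i \sum_j \sum_k \lhood{P_i, P_j, T_k, \lambda_{ijg(k)}}$ into the terms where neither block index equals $b$ or $a$, and the terms where at least one index is $b$ or $a$. When $u$ moves from $P_b$ to $P_a$, only the latter terms change, so the former collapse into the constant $Z$. It is convenient at this point to regroup the sum over segments $k$ into a sum over levels $h$: since $\lambda_{ijg(k)}$ depends on $k$ only through $g(k)$, and $\cnt{P_i, P_j, T_k}$ is additive over disjoint time intervals, we get $\sum_{k : g(k) = h}\lhood{P_i, P_j, T_k, \lambda_{ijh}} = \cnt{P_i, P_j, \mathcal{S}_h}\log\lambda_{ijh} - \lambda_{ijh}\duration{\mathcal{S}_h} = \lhood{P_i, P_j, \mathcal{S}_h, \lambda_{ijh}}$ with $\duration{\mathcal{S}_h} = t_h$ as defined in the statement. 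So after this regrouping the problem is purely combinatorial in the blocks, indexed by $h$.

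Next I would write down, for the fixed node $u$, exactly which edge counts and which $\abs{P_i \times P_j}$ terms shift. Let $P_b' = P_b \setminus \{u\}$ and $P_a' = P_a \cup \{u\}$. The changed contribution is $\sum_h \bigl[ \lhood{P_a', P_a', \mathcal{S}_h, \lambda_{aah}} + \sum_{j \neq a} \lhood{P_a', P_j', \mathcal{S}_h, \lambda_{ajh}} \bigr]$ minus the same expression with $a$ replaced by $b$ and primes dropped on the $b$-side, where $P_j' = P_j$ for $j \notin \{a,b\}$. Now I separate within each of these the part that still does not mention $u$ (e.g. $\lhood{P_b', P_j, \mathcal{S}_h, \lambda_{bjh}}$ for $j \neq a, b$) — these, together with the genuinely untouched blocks, form $Z$ — from the part that is exactly "$u$ paired with something." Collecting the $u$-terms: for each level $h$ and each block $P_j$ with $j \neq a$ we pick up $\cnt{u, P_j, \mathcal{S}_h}\log\lambda_{ajh} - \lambda_{ajh}\abs{P_j} t_h$; the case $j = a$ contributes $\cnt{u, P_a, \mathcal{S}_h}\log\lambda_{aah} - \lambda_{aah}\abs{P_a} t_h$, which is consistent with writing $\sum_{j=1}^R$ and using $c_{jh} = \cnt{u, P_j, \mathcal{S}_h}$ provided one is careful that $u \notin P_a$ at the moment the count is taken (so $\abs{P_a}$ is the old size, matching $\abs{P_j}$ in the statement); the self-pair $\{u,u\}$ is excluded by the no-self-loop convention. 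Finally, the single term describing $u$ paired with the rest of its \emph{old} block gives $-\sum_h \bigl(\cnt{u, P_b', \mathcal{S}_h}\log\lambda_{bbh} - \lambda_{bbh}\abs{P_b'} t_h\bigr)$ plus $u$'s interaction with $P_a$ from the $b$-side, but on the $b$-side $u$ is being \emph{removed}, so these become $+\lambda_{bah} t_h$ correction terms once the old $\abs{P_a}$-versus-new bookkeeping is reconciled — this is the origin of the lone $\sum_h \lambda_{bah} t_h$ term, and any remaining $u$-to-$P_b$ pieces are absorbed into $Z$ since they do not depend on the \emph{target} block $a$.

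The main obstacle, and the part I would write most carefully, is precisely this last bookkeeping: when $u$ leaves $P_b$ and joins $P_a$, the sizes $\abs{P_b}$ and $\abs{P_a}$ both change, and the pair set $P_a \times P_b$ is counted differently before and after. One has to verify that every term whose value depends on $b$ but not on $a$ can legitimately be thrown into the constant $Z$ (it can, because $Z$ is only required to be independent of the target $a$), and that the asymmetric-looking residue is exactly $\sum_h \lambda_{bah} t_h$ with the stated sign. A clean way to organize this is to compute $\lhood{\mathcal{P}'', \mathcal{T}, g, \Lambda} - \lhood{\mathcal{P}, \mathcal{T}, g, \Lambda}$ where $\mathcal{P}''$ is the intermediate partition with $u$ removed from $P_b$ and placed in a fresh singleton block, observe this difference is a constant in $a$, and then add $u$ to $P_a$; the second step produces exactly $\sum_h \sum_j c_{jh}\log\lambda_{ajh} - \abs{P_j}\lambda_{ajh} t_h$ plus the $\lambda_{bah}$ term coming from the (now-empty) transitional block interaction, and everything else is the $a$-independent constant $Z$. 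Once the indexing is pinned down, the remaining manipulations are the routine additivity of counts and durations already noted.
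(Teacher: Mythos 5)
Your proposal is correct and follows essentially the same route as the paper: the paper's proof is precisely your ``clean'' organization at the end, introducing the intermediate partition $\mathcal{Q}$ with $u$ deleted from $P_b$, absorbing $\lhood{\mathcal{Q}, \mathcal{T}, g, \Lambda} - \lhood{\mathcal{P}, \mathcal{T}, g, \Lambda}$ into the $a$-independent constant $Z$, and regrouping the remaining $u$-terms by level $h$. One small imprecision: the $\sum_{h} \lambda_{bah} t_h$ term does not come from the transitional block's interactions but from the fact that $u$'s new interaction with its old block involves only $\abs{P_b} - 1$ nodes while the stated formula charges $\abs{P_j} = \abs{P_b}$, so a $+\lambda_{bah} t_h$ correction is needed.
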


The proposition leads to the pseudo-code given in Algorithm~\ref{alg:groups}.
The algorithm computes an array $c$ and then uses Proposition~\ref{prop:gain} to 
compute the gain for each swap, and consequently to find the optimal gain.

\begin{algorithm}[t!]

\caption{Algorithm $\findgroups(\mathcal{P}, \Lambda)$ for finding groups for a fixed segmentation $\mathcal{T}$, $g$ and parameters $\Lambda$}
\label{alg:groups}

$p(v) \define $ group index of $v$\;
$s(e) \define $ segment index of $e$\;

$d[h] \define \sum_{g(k) = h} \duration{T_k}$\;

\ForEach{$v \in V$} {
	$b \define p(v)$\;
	$c[j, h] \define$ array $c_{jh}$ as defined in Proposition~\ref{prop:gain}\;
	\ForEach{$a = 1, \ldots R$} {
		$x[a] \define \sum_{h = 1}^H \lambda_{bah}d[h] + \sum_{j = 1}^R c[j, h] \log \lambda_{ajh} - \abs{P_j}\lambda_{ajh} d[h] $ \label{alg:gain}\;
	}
	$p(v) \define \arg \max_a x[a]$ (update $\mathcal{P}$ also)\;
}
\Return $\mathcal{P}$\;
\end{algorithm}

Computing the array requirs iterating over the adjacent edges, leading to
$\bigO{\abs{N(v)}}$ time, and computing the gains requires $\bigO{R^2H}$ time.
Consequently, the computational complexity for \findgroups is $\bigO{m + R^2Hn
+ K}$.

The running time can be further optimized by modifying Line~\ref{alg:gain}. There are at most
$2m$ non-zero $c[i, j]$ entries (across all $v \in V$), consequently we can speed up the computation
of a second term
by ignoring the zero entries in $c[i, j]$. In addition, for each $a$, the remaining terms
\[
	\sum_{h = 1}^H \lambda_{bah}d[h] + \sum_{j = 1}^R \abs{P_j}\lambda_{ajh} d[h]
\]
can be precomputed in $\bigO{RH}$ time and maintained in $\bigO{1}$ time.
This leads to a running time of $\bigO{m + Rn + R^2H + K}$.

\subsection{Updating Poisson process parameters}

Our next step is to update $\Lambda$ while maintaining the rest of the parameters fixed. This refers to \findlambda in Algorithm~\ref{alg:main}.
Fortunately, this step is straightforward as the optimal parameters are equal to
\[
	\lambda_{ijh} = \frac{\cnt{P_i, P_j, \mathcal{S}_h}}{\abs{P_i \times P_j} \duration{\mathcal{S}_h}},
\]
where $\mathcal{S}_h = \set{T_k \in \mathcal{T} \mid h = g(k)}$ are the segments using the $h$th level.
Updating the parameters requires $\bigO{m + R^2H + K}$ time.

In practice, we would like to avoid having $\lambda = 0$ as this forbids any edges
occurring in the segment, and we may get stuck in a local maximum. We approach this by shifting $\lambda$
slightly by using 
\[
	\lambda_{ijh} = \frac{\cnt{P_i, P_j, \mathcal{S}_h} + \theta}{\abs{P_i \times P_j} \duration{\mathcal{S}_h} + \eta},
\]
where $\theta$ and $\eta$ are user parameters. 

\subsection{Finding segmentation}

Our final step is to update the segmentation $\mathcal{T}$ and the level
mapping $g$, while keeping $\Lambda$ and $\mathcal{P}$ fixed. Luckily, we can
solve this subproblem in linear time.

Note that we need to keep $\Lambda$ fixed, as otherwise the problem is \np-hard.
\begin{proposition}
\label{prop:nphard3}
Finding optimal $\Lambda$, $\mathcal{T}$ and $g$ for fixed $\mathcal{P}$
is \np-hard.
\end{proposition}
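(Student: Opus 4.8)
The plan is to reduce from the \np-hard problem of Proposition~\ref{prop:nphard2} (finding the optimal partition $\mathcal{P}$ for fixed $\Lambda$, $\mathcal{T}$, $g$, already known to be hard for $H = K = 1$, $R = 2$), or, more directly, from the same underlying combinatorial problem used there — most naturally a variant of \textsc{Max-Cut} or graph bisection. The key observation is that when $H = K = 1$ there is no segmentation choice to make at all ($\mathcal{T}$ is forced, $g$ is the constant map), so ``finding optimal $\Lambda$, $\mathcal{T}$, $g$ for fixed $\mathcal{P}$'' with $H=K=1$ degenerates to ``finding optimal $\Lambda$ for fixed $\mathcal{P}$,'' which is trivial. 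Hence, unlike the previous two propositions, this one genuinely needs $K > 1$: the hardness must come from the interaction between choosing the segment boundaries and choosing which segments collapse to the same level. So I would set up an instance with $K$ moderately large and $H < K$, fix a partition $\mathcal{P}$, and encode a hard instance into the timeline.

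First I would pick the reduction source. A clean choice is to route through the fact that, with $\mathcal{P}$ fixed, the timeline-side problem is itself a segmentation-with-shared-levels problem on an aggregated signal: for each candidate segment $T_k$ we only see the counts $\cnt{P_i, P_j, T_k}$ and the duration, and we must (a) choose boundaries $t_1 < \dots < t_{K-1}$ among the edge timestamps, and (b) choose $g : [K] \to [H]$, and (c) choose $\Lambda$ — where for fixed boundaries and fixed $g$ the optimal $\Lambda$ is the closed-form ratio given in the text, so the objective becomes a sum over levels of a concave ``entropy-like'' term. This is exactly the structure of the ``segmentation with $H$ distinct values'' problem of \citet{gionis2003finding}, which is known to be \np-hard when one must jointly choose the segmentation and the clustering of segments into $H$ classes; the multi-dimensional aggregated counts give us enough freedom to embed an arbitrary instance of that problem (or, at bottom, an instance of a clustering problem such as the one underlying its hardness). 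So the heart of the argument is: (i) show the fixed-$\mathcal{P}$ problem contains, as a special case, the recurrent-segmentation problem on vector-valued sequences; (ii) invoke or re-derive the \np-hardness of the latter.

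The main steps, in order, would be: (1) State precisely the restricted problem obtained by fixing $\mathcal{P}$ and substituting the optimal $\Lambda$, reducing the objective to $\sum_{h} f(\text{pooled count}_h, \text{pooled duration}_h)$ where $f(c,\Delta) = c\log(c/\Delta) - c$ is the optimal per-level log-likelihood. (2) Choose a temporal graph in which the pooled counts over any candidate segment realize a prescribed vector signal — e.g., use $\mathcal{P}$ with enough groups (or a single group with parallel edges if multigraphs were allowed; since they are not, use $R$ groups to get $R(R+1)/2$ independent ``channels'' and place edges to hit target counts), and place all edges of a ``block'' at essentially one timestamp so segment boundaries effectively choose a subset/ordering of blocks. (3) Show that an optimal $(\mathcal{T}, g, \Lambda)$ corresponds to an optimal recurrent segmentation of the constructed signal, and conversely. (4) Cite the \np-hardness of recurrent segmentation (the $H < K$ shared-parameter version), or, if a self-contained proof is wanted, reduce from a standard problem such as \textsc{Clustering}/\textsc{Max-Cut} directly into the level-mapping choice.

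The hard part, I expect, will be step (2)–(3): making the encoding exact. Two technical nuisances stand out. One is that segment boundaries must coincide with edge timestamps and segments are contiguous in time, whereas the level mapping $g$ is unconstrained — so I need the timeline geometry (clustered bursts of edges separated by gaps) to force the optimal boundaries to fall in the gaps, after which the only real freedom left is the combinatorial choice of $g$; a gadget/padding argument with well-separated bursts and a counting argument on how many of the $K-1$ boundaries ``must'' be spent separating bursts should do it. The second is that $f(c,\Delta)$ is not linear, so the pooled objective is not simply additive in the way \textsc{Max-Cut} weights are; I would handle this by either (a) choosing all durations equal and all relevant counts from a small set so that $f$ acts, on the realized values, like an affine function plus a strictly concave ``bonus'' for keeping equal-count segments together — turning the level-assignment into a recognizable partition problem — or (b) simply quoting the existing hardness of recurrent segmentation, which already absorbed exactly this concavity issue. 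I would present the quoted-result version as the main proof and relegate the self-contained reduction to a remark.
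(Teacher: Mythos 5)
Your strategic framing is right and matches the paper's in one important respect: you correctly diagnose that with $H=K=1$ the problem trivializes, so the hardness must live in the joint choice of boundaries and level mapping, and you propose to neutralize the boundary choice (via well-separated bursts) so that all remaining freedom is the combinatorial assignment $g$. The paper does exactly this, but more bluntly: it builds a star graph with $n$ distinct timestamps, sets $K=n$ so the segmentation is \emph{forced} to be the $n$ unit segments, and sets $H=n/3$, so the only decision is how $g$ groups the $n$ segments into $n/3$ levels. However, there is a genuine gap in how you then establish hardness of that residual problem. Your main route is to ``simply quote the existing hardness of recurrent segmentation'' from \citet{gionis2003finding}; that result is for piecewise-constant approximation under $L_1$/$L_2$ cost, and \np-hardness does not transfer across cost functions. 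To use it you would need to embed an \emph{arbitrary} instance of the $L_2$ $(k,h)$-segmentation problem into your Poisson instance with costs matching up to an affine transformation, and there is no reason the Poisson log-likelihood $c\log(c/\Delta)-c$ on realizable integer counts can simulate arbitrary real-valued $L_2$ clustering costs. So option (b) fails as stated, and option (a) is only a sketch: you never fix the source problem or verify the inequality that makes level-grouping decisions encode it.

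For comparison, the paper's self-contained reduction is from \textsc{3-Matching}: elements of the ground set become timestamps, every edge is incident to a single hub node $u$, and gadget node sets ($A$, $B$, $C$) are wired so that, after substituting the closed-form optimal $\Lambda$, a level containing three timestamps contributes $\alpha$ to the likelihood if those three elements form a set of the instance and $\beta<\alpha$ otherwise. A large padding set $D$ (of size $24n^6$ per timestamp) makes any level of size other than $3$ strictly suboptimal, which is the analogue of your ``force the boundaries, then force the class sizes'' step. If you want to complete your proof, you need to carry out a reduction of this kind with the Poisson cost explicitly — the concavity issue you flag in your last paragraph is precisely where the real work is, and it cannot be outsourced to the $L_p$ result.
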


On the other hand, if we fix $\Lambda$, then we can solve the optimization problem
with a dynamic program. To be more specific, assume that the edges in $E$ are ordered, and write $o[e, k]$ to be the log-likelihood
of $k$-segmentation covering the edges prior and including $e$. Given two edges $s, e \in E$, let $y(s, e; h)$
be the log-likelihood of a segment $(t(s), t(e)]$ using the $h$th level of parameters, $\lambda_{\cdot\cdot h}$.
If $s$ occurs after $e$ we set $y$ to be $-\infty$.
Then the identity
\[
	o[e, k] = \max_h \max_s y(s, e; h) + o[s, k - 1]
\]
leads to a dynamic program.

Using an off-the-shelf approach by~\citet{bellman:61:on} leads to a computational complexity of $\bigO{m^2 K H}$, assuming that
we can evaluate $y(s, e; h)$ in constant time. 

However, we can speed-up the dynamic program by using a SMAWK algorithm~\citep{aggarwal1987geometric}. Given a function $x(i, j)$,
where $i, j = 1, \ldots, m$, SMAWK computes $z(j) = \arg \max_i x(i, j)$ in $\bigO{m}$ time, under two assumptions.
The first assumption is that we can evaluate $x$ in constant time. The second assumption is that $x$ is \emph{totally monotone}.
We say that $x$ is totally monotone,
if $x(i_2, j_1) > x(i_1, j_1)$, then $x(i_2, j_2) \geq x(i_1, j_2)$
for any $i_1 < i_2$ and $j_1 < j_2$.

We have the immediate proposition.
\begin{proposition}
\label{prop:monotone}
Fix $h$.
Then the function
	$x(s, e) = y(s, e; h) + o[s, k - 1]$
is totally monotone.
\end{proposition}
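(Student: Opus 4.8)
The plan is to unfold the definition of $y(s, e; h)$ as a Poisson log-likelihood of a single segment and then check the total monotonicity inequality directly. Recall that $y(s, e; h)$ is the log-likelihood of the segment $(t(s), t(e)]$ evaluated with the fixed level-$h$ parameters $\lambda_{\cdot\cdot h}$; by the definition of $\lhood{\cdot}$ this is a sum over all group pairs $(i, j)$ of $\cnt{P_i, P_j, (t(s), t(e)]} \log \lambda_{ijh} - \lambda_{ijh}(t(e) - t(s))$. The key observation is that, for fixed $h$ and fixed $e$, the term $y(s, e; h)$ splits into a part that depends only on $s$ (the $-\lambda_{ijh} \cdot (-t(s))$ contributions, i.e. $+\lambda_{ijh} t(s)$, together with the edge-count contributions coming from edges at or before $s$, which we can peel off using additivity of $\cnt{\cdot}$ over disjoint time intervals) and a part that depends only on $e$. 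Concretely, writing $Y(e;h)$ for the log-likelihood of the prefix segment ending at $e$, one has $y(s,e;h) = Y(e;h) - Y(s;h) + (\text{a bilinear correction})$; the only genuinely $s$-and-$e$-coupled quantity is $\sum_{i,j} \lambda_{ijh} (t(e) - t(s))$, which is \emph{separable}: it equals $A_h t(e) - A_h t(s)$ with $A_h = \sum_{i,j}\lambda_{ijh} \ge 0$ a constant once $h$ and $\Lambda$ are fixed.

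Putting this together, $x(s, e) = y(s, e; h) + o[s, k-1]$ can be written in the form $x(s, e) = f(s) + g(e)$ for $s \le e$ and $x(s,e) = -\infty$ for $s > e$, where $f(s)$ collects $o[s,k-1]$, the prefix-likelihood term at $s$, and the $+A_h t(s)$ term, and $g(e)$ collects the prefix-likelihood term at $e$ and the $-A_h t(e)$ term. Any function of this additively separable form (with $-\infty$ below the diagonal) is totally monotone: take $i_1 < i_2$ and $j_1 < j_2$ and suppose $x(i_2, j_1) > x(i_1, j_1)$. If $i_2 > j_1$ the left side is $-\infty$, contradicting the hypothesis, so $i_2 \le j_1 < j_2$, hence also $i_1 < i_2 \le j_1$; thus all four entries lie on or above the diagonal and are finite, and the strict inequality $f(i_2) + g(j_1) > f(i_1) + g(j_1)$ gives $f(i_2) > f(i_1)$, which immediately yields $f(i_2) + g(j_2) \ge f(i_1) + g(j_2)$, i.e. $x(i_2, j_2) \ge x(i_1, j_2)$, as required.

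The main thing to get right is the bookkeeping that shows $y(s,e;h)$ really is separable in $s$ and $e$ up to the diagonal: one must verify that the edge-count term $\cnt{P_i, P_j, (t(s), t(e)]}$ decomposes as a difference of prefix counts (which uses that segment boundaries coincide with edge timestamps, so there is no ambiguity at the endpoints), and that the duration term $t(e)-t(s)$ is the only remaining coupling and is itself additively separable. Everything else — the handling of the $-\infty$ entries and the one-line implication from separability to total monotonicity — is routine. I expect no real obstacle beyond this bookkeeping, since total monotonicity of a separable-plus-lower-triangular-$(-\infty)$ matrix is essentially immediate.
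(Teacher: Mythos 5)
Your proof is correct and follows essentially the same route as the paper's: both arguments reduce to the separability $y(s,e;h)=f[e,h]-f[s,h]$ of the segment log-likelihood into prefix terms, from which total monotonicity is immediate (the paper phrases this as the column difference $x(s,e_2)-x(s,e_1)=y(e_1,e_2;h)$ being independent of $s$, which is equivalent to your additive form $x(s,e)=f(s)+g(e)$ above the diagonal). You are in fact slightly more explicit than the paper about handling the $-\infty$ entries below the diagonal, which the paper dismisses with ``we can safely assume $t(s_2)\leq t(e_1)$.''
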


Our last step is to compute $x$ in constant time. 
This can be done by first precomputing $f[e, h]$, the log-likelihood of a segment starting
from the epoch and ending at $t(e)$ using the $h$th level. The log-likelihood of a segment is then
$y(s, e; h) = f[e, h] - f[s, h]$, which we can compute in constant time.

\begin{algorithm}[ht!]

\caption{Algorithm $\findsegments(\mathcal{P}, \Lambda)$ for finding optimal segmentation for fixed groups $\mathcal{P}$ and parameters $\Lambda$}
\label{alg:seg}

$t_{\mathit{min}} \define \min \set{t \mid (u, v, t) \in E}$\;
$f[e, h] \define $ log-likelihood of a segment $[t_{\mathit{min}}, t(e)]$ using parameters $\lambda_{\cdot\cdot h}$\;

\lForEach {$e \in E$} {
	$o[e, 1] \define \max_h f[e, h]$
}

\ForEach {$k = 2, \ldots, K$} {
	$x(s, e; h) \define o[s, k - 1] + f[e, h] - f[s, h]$\label{alg:segx}\;
	\ForEach {$h = 1, \ldots, H$} {
		$z[e, h] \define \arg \max_s x(s, e; h)$ for each $e \in E$ (use SMAWK)\;
	}
	$o[e, k] \define \max_h x(z[e, h], e; h)$ for each $e \in E$\;
	$r[e, k] \define \arg \max_h x(z[e, h], e; h)$\;
	$q[e, k] \define z[e, r[e, k]]$\;
}

Use $r$ and $q$ to
recover the optimal segmentation $(T_1, \ldots, T_K)$ and the level mapping $g$ \;

\Return $(T_1, \ldots, T_K)$, $g$\;

\end{algorithm}

The pseudo-code for finding the segmentation is given in Algorithm~\ref{alg:seg}.
A more detailed version of  the pseudo-code is given in Appendix.
Here, we first precompute $f[e, h]$.
We then solve segmentation with a dynamic program
by maintaining 3 arrays:
$o[e, k]$ is the log-likelihood
of $k$-segmentation covering the edges up to $e$, $q[e, k]$ is the starting
point of the last segment responsible for $o[e, k]$, and $r[e, k]$ is the level of the  
last segment responsible for $o[e, k]$.

In the inner loop we use SMAWK to find optimal
starting points. Note that we have to do this for each $h$, and only then select the optimal $h$ for each segment.
Note that we do define $x$ on Line~\ref{alg:segx} but we do not compute its values. Instead this function
is given to SMAWK and is evaluated in a lazy fashion.

Once we have constructed the arrays, we can recursively recover the optimal segmentation and the level mapping
from $q$ and $r$, respectively.

\findsegments runs in $\bigO{mKH + HR^2}$ time since we need to call SMAWK $\bigO{HK}$ times.

We were able to use SMAWK because
the optimization criterion turned out to be totally monotone. This was possibly only because we fixed
$\Lambda$. The notion of using SMAWK to speed up a dynamic program with totally monotone scores was proposed by~\citet{galil:90:linear}.
\citet{fleisher:06:online,hassin:91:improved} used this approach to solve dynamic program
segmenting monotonic one-dimensional sequences with $L_1$ cost. 

We fixed $\Lambda$ because Proposition~\ref{prop:nphard3} states that the optimization problem for $H < K$ cannot be solved in polynomial time if
we optimize $\mathcal{T}$, $g$, and $\Lambda$ at the same time.
Proposition~\ref{prop:nphard3} is the main reason why we cannot use directly the ideas proposed by~\citet{corneli2018multiple}
as the authors use the dynamic program to find  $\mathcal{T}$ and $\Lambda$ at the same time.

However, if $K = H$, then the
problem is solvable with a dynamic program but requires $\bigO{Km^2R^2}$ time.
However, if we consider the optimization problem as a minimization problem and shift
the cost with a constant so that it is always positive, then
using algorithms by~\citet{tatti2019strongly,guha:06:estimate}
we can obtain $(1 + \epsilon)$-approximationn with
$\bigO{K^3 \log K \log m + K^3 \epsilon^{-2} \log m}$
number of cost evaluations. Finding the optimal parameters and computing the
cost of a single segment can be done in $\bigO{R^2}$ time with $\bigO{R^2 + m}$
time for precomputing. This leads to a total time of
	$\bigO{R^2(K^3 \log K \log m + K^3 \epsilon^{-2} \log m) + m}$
for the special case of $K = H$.

\section{Related work}\label{sec:related}

The closest related work is the paper by \citet{corneli2018multiple} which
can be viewed as a special case of our approach by requiring $K = H$, in
other words, while the Poisson process may depend on time we do not take into account
any recurrent behaviour. Having $K = H$ simplifies the optimization problem
somewhat. While the general problem still remains difficult, we can now
solve the segmentation $\mathcal{T}$ \emph{and} the parameters $\Lambda$ simultaneously
using a dynamic program as was done by \citet{corneli2018multiple}.
In our problem we are forced to fix $\Lambda$ while solving the segmentation problem.
Interestingly enough, this gives us an advantage in computational time:
we only need $\bigO{KHm + HR^2}$ time to find the optimal segmentation while
the optimizing $\mathcal{T}$ and $\Lambda$ simultaneously requires $\bigO{R^2Km^2}$ time.
On the other hand, by fixing $\Lambda$ we may have a higher chance of getting stuck in a local maximum.

The other closely related work is by \citet{gionis2003finding}, where the
authors propose a segmentation with shared centroids. Here, the input is a
sequence of real valued vectors and the segmentation cost is either $L_2$ or
$L_1$ distance. Note that there is no notion of groups $\mathcal{P}$, the authors are
only interested in finding a segmentation with recurrent sources.  The authors
propose several approximation algorithms as well as an iterative method. The
approximation algorithms rely specifically on the underlying cost, in this case
$L_1$ or $L_2$ distance, and cannot be used in our case. Interestingly enough,
the proposed iterative method did not use SMAWK optimization, so it is possible
to use the optimization described in Section~\ref{sec:algorithm} to speed up
the iterative method proposed by \citet{gionis2003finding}.

In this paper, we used stochastic block model (see~\citep{Holland1983,anderson1992}, for example) as a starting point and extend it to temporal networks with recurrent sources.
Several past works have extended stochastic block models to temporal networks:
\citet{yang_2011,Matias_2015} proposed an approach where the nodes can change block memberships
over time. In a similar fashion, \citet{xu_2014} proposed a model where the adjacency matrix
snapshots are generated with a logistic function whose latent parameters evolve over time.
The main difference with our approach is that in these models the group memberships
of nodes are changing while in our case we keep the memberships constant and update
the probablties of the nodes. Moreover, these methods are based on graph snapshots
while we work with temporal edges. In another related work,
\citet{Matias_2018_estimation} modelled interactions using Poisson processes
conditioned by stochastic block model.  Their approach was to estimate the
intensities non-parametrically through histograms or kernels while we model
intensities with recurring segments. For a survey on stochastic block models,
including extensions to temporal settngs, we refer the reader to a survey
by~\citet{lee2019survey}.

Stochastic block models group similar nodes together; here similarity means
that nodes in the same group have the similar probabilities connecting to nodes
from other group.  A similar notion but a different optimization criterion was
proposed by~\citet{Arockiasamy2016combi}.  Moreover, \citet{henderson2012rolx}
proposed a method where nodes with similar neighborhoods are discovered.

In this paper we modelled the recurrency by forcing the segments to share their
parameters.  An alternative approach to discover recurrency is to look
explictly for recurrent
patterns~\citep{ozden1998,han1998MiningSP,han1999,ma_2001,yang_2003,galbrun2019}.
We should point out that these works are not design to work with graphs; instead they work
with event sequences. We leave adapting this methodology for temporal networks
as an interesting future line of work.

Using segmentation to find evolving structures in networks 
have been proposed in the past: \citet{kostakis2017} introduced a method where a temporal network
is segmented into $k$ segments with $h < k$ summaries. A summary is a graph, and the cost
of an individual segment is the difference between the summary and the snapshots in the segment. 
Moreover,~\citet{rozenshtein2020finding} proposed discovering dense subgraphs in individual segments.

\begin{table}[t!]
\setlength{\tabcolsep}{0pt}
\caption{Dataset characteristics and results from the experiments. Here, $n$ is
the number of nodes, $m$ is the number of edges, $R$ is the number of groups, $K$ is the number of segments, $H$ is the number of levels, $LL_1$ is the normalized log-likelihood for the ground truth, $G$ is the Rand index, $LL_2$ is the discovered normalized log-likelihood, $I$ is
the number of iterations, and $CT$ is the computational time in seconds. }

\label{tab:stats1}
\pgfplotstabletypeset[
    begin table={\begin{tabular*}{\textwidth}},
    end table={\end{tabular*}},
    col sep=comma,
	columns = {name, n, m, L, K, H,LL_1, R,LL_2,I,CT},
    columns/name/.style={string type, column type={@{\extracolsep{\fill}}l}, column name=\emph{Dataset}},
    columns/n/.style={fixed, set thousands separator={\,}, column type=r, column name=$n$},
    columns/m/.style={fixed, set thousands separator={\,}, column type=r, column name=$m$},
    columns/L/.style={fixed, set thousands separator={\,}, column type=r, column name=$R$},
    columns/K/.style={fixed, set thousands separator={\,}, column type=r, column name=$K$},
    columns/H/.style={fixed, set thousands separator={\,}, column type=r, column name=$H$},
    columns/LL_1/.style={fixed, set thousands separator={\,}, column type=r, column name=$LL_1$},
    columns/R/.style={fixed, set thousands separator={\,}, column type=r, column name=$G$},
    columns/LL_2/.style={fixed, set thousands separator={\,}, column type=r, column name=$LL_2$},
    columns/I/.style={fixed, set thousands separator={\,}, column type=r, column name=$I$},
    columns/CT/.style={string type, column type=r, column name=$CT$},
    every head row/.style={
		before row={\toprule},
			after row=\midrule},
    every last row/.style={after row=\bottomrule},
]
{st.csv}
\end{table}

\section{Experimental evaluation}\label{sec:exp}

The goal in this section is to evaluate experimentally our
algorithm. Towards that end, we first test how well the algorithm discovers the ground truth using synthetic datasets.
Next we study
the performance of the algorithm 
on real-world temporal datasets in terms of running time and likelihood.
We compare our results to the following baselines: the running times are compared to a naive implementation where we do not
utilize SMAWK algorithm, and the likelihoods are compared to the likelihoods of the $(R,K)$ model.

We implemented the algorithm in Python\footnote{The source code is available at \url{https://version.helsinki.fi/chamwick/recurrent-segmentation-sbm.git}
\label{foot:code}} and performed the experiments using a 2.4GHz Intel Core i5 processor and 16GB RAM. 

\paragraph{Synthetic datasets:}
To test our algorithm, we generated  $5$ temporal networks with known groups and
known parameters $\Lambda$ which we use as a ground truth. To generate data,
we first chose a set of nodes $V$, number of groups $R$, number of segments
$K$, and  number of levels $H$. Next we assumed that each node has an equal
probability of being chosen for any group. Based on this assumption, the group
memberships were selected at random. 

We then randomly generated $\Lambda$ 
from a uniform distribution. More specifically, we generated  $H$  distinct
values for each pair of groups and map them to each segment. Note
that, we need to ensure that
each distinct level is assigned to at least one segment. To guarantee this, we
first deterministically assigned the set of $H$ levels to first $H$ segments and
the remaining ($K-H$) segments are mapped by randomly selecting ($K-H$)
elements from $H$ level set.

Given the group memberships and their related $\Lambda$, we then
generated a sequence of timestamps with a Poisson process for each pair of nodes. 
The sizes of all synthetic datasets are given in
Table~\ref{tab:stats1}.

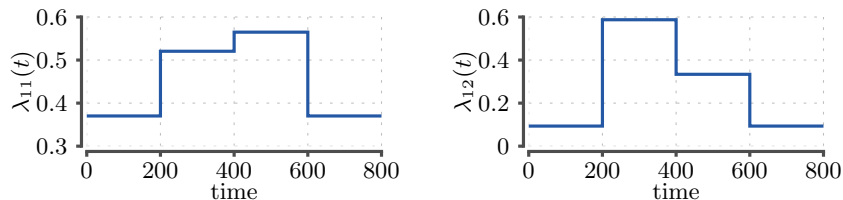
\begin{figure}[ht!]
\setlength{\tabcolsep}{10pt}
\newlength{\synfigheight}
\setlength{\synfigheight}{3.3cm}
\newlength{\synfigwidth}
\setlength{\synfigwidth}{5.5cm}
\begin{tabular}{ll}
\begin{tikzpicture}
\begin{axis}[xlabel={time}, ylabel= {$\lambda_{\mathrm{11}}(t)$},
    width = \synfigwidth,
    height = \synfigheight,
    xmin = 0,
    xmax = 800,
    ymin = 0.3,
    ymax = 0.6,
    scaled y ticks = false,
    cycle list name=yaf,
    yticklabel style={/pgf/number format/fixed},
    no markers,
]
\addplot table [x=x, y=y2, col sep=comma] {lamb-2.csv};
\pgfplotsextra{\yafdrawaxis{0}{800}{.3}{.6}}
\end{axis}
\end{tikzpicture}&

\begin{tikzpicture}
\begin{axis}[xlabel={time}, ylabel= { $\lambda_{\mathrm{12}}(t)$},
    width = \synfigwidth,
    height = \synfigheight,
    xmin = 0,
    xmax = 800,
    ymin = 0.0,
    ymax = 0.6,
    scaled y ticks = false,
    cycle list name=yaf,
    yticklabel style={/pgf/number format/fixed},
    no markers,
]
\addplot table [x=x, y=y4, col sep=comma] {lamb-2.csv};
\pgfplotsextra{\yafdrawaxis{0}{800}{.0}{.6}}
\end{axis}
\end{tikzpicture} \\

\end{tabular}

\caption{Discovered parameters $\lambda_{\mathrm{11}}(t)$,
$\lambda_{\mathrm{12}}(t)$ for the \dtname{Synthetic-4} dataset. Parameter
$\lambda_{\mathrm{12}}(t)$ implies the Poisson process parameter between group
$1$ and group $2$ as a function of time.  }

\label{fig:intensity-2}
\end{figure}

\paragraph{Real-world datasets:}
We used $7$ publicly available temporal datasets. \dtname{Email-Eu-1} and \dtname{Email-Eu-2} are collaboration networks between researchers in a European research institution.\!\footnote{\url{http://snap.stanford.edu}\label{foot:snap}}
\dtname{Math Overflow} contains user interactions in  Math Overflow web site while answering to the questions.\!\footref{foot:snap}
\dtname{CollegeMsg} is an online message network at the University of California, Irvine.\!\footref{foot:snap}
\dtname{MOOC} contains actions by users  of a popular MOOC platform.\!\footref{foot:snap}
\dtname{Bitcoin} contains member rating interactions  in a bitcoin trading platform.\!\footref{foot:snap}
\dtname{Santander} contains station-to-station links that occurred on Sep $9$, $2015$  from the Santander bikes hires in  London.\!\footnote{\url{https://cycling.data.tfl.gov.uk}} The sizes of these networks are given in Table~\ref{tab:stats1}.

\paragraph{Results for synthetic datasets:}

To evaluate the accuracy of our algorithm, we compare the set of
discovered groups and their intensity functions with the ground truth groups
and intensity functions. Our algorithm found exact 
groups of nodes: in
Table~\ref{tab:stats1} we can see that Rand index
\!\footnote{The Rand index is used to measure the similarity between two
groups~\citet{rand1971objective}.
The Rand index ranges between $0$ and $1$, and when the group partitions agree perfectly, the Rand index  is $1$. For a complete disagreement, the Rand index is $0$.\label{foot:rand}}
(column $G$) is equal to $1$.
 
Next we compare the log-likelihood values from true models against the log-likelihoods of discovered models. To
evaluate the log-likelihoods, we normalize the log-likelihood, that is we computed
	$\lhood{\mathcal{P}, \mathcal{T}, g, \Lambda}/\lhood{\mathcal{P}', \mathcal{T}', g', \Lambda'}$,
where $\mathcal{P}', \mathcal{T}', g', \Lambda'$ is a model with a single group and a single segment.
Since all our log-likelihood values were negative,
the \emph{normalized log-likelihood} values were between $0$ and $1$, and \emph{smaller} values are better.

As demonstrated in column $LL_1$ and column $LL_2$ of
Table~\ref{tab:stats1}, we obtained similar normalized log-likelihood
values when compared to the normalized log-likelihood of the ground truth.
The obtained normalized log-likelihood values were all slightly better
than the log-likelihoods of the generated models, that is, our solution is as good as the ground truth.

An example of
the
discovered parameters, $\lambda_{11}$ and $\lambda_{12}$, for \dtname{Synthetic-4} dataset are shown in
Figure~\ref{fig:intensity-2}.
The discovered parameters matched closely to the generated parameters with the biggest absolute difference being $0.002$ for \dtname{Synthetic-4}.
The figures for other values and other synthetic datasets are similar.

\paragraph{Computational time:}

Next we consider the  computational  time  of  our  algorithm. 
We varied the parameters $R$,
$K$, and $H$ for each dataset. The model parameters and computational times are given in
Table~\ref{tab:stats1}. From the last column
$CT$, we see that the running times are reasonable despite using inefficient
Python libraries: for example we were able to compute the model for \dtname{MOOC}
dataset, with over $400\,000$ edges, under four minutes.
This implies that the algorithm scales well for large networks.
This is further supported by a low number of iterations, column $I$ in Table~\ref{tab:stats1}.

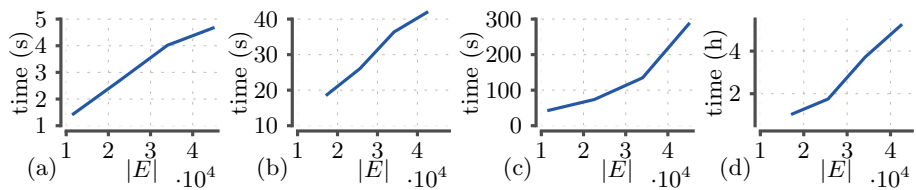
\begin{figure}[t!]
\begin{center}
\begin{tabular}{llll}
\begin{tikzpicture}
\begin{axis}[xlabel={$\abs{E}$}, ylabel= {time (s)},
    width = 3.6cm,
	height = 3cm,
    xmin = 10000,
    xmax = 46000,
    ymin = 1,
    ymax = 5,
    scaled y ticks = false,
    cycle list name=yaf,
    yticklabel style={/pgf/number format/fixed},
    no markers,
]
\addplot table [x=x, y=y, col sep=comma] {edges-synthetic.csv};
\pgfplotsextra{\yafdrawaxis{10000}{46000}{1}{5}}
\end{axis}
\node[anchor=north east] at (0, -0.3) {(a)};
\end{tikzpicture}&
\begin{tikzpicture}
\begin{axis}[xlabel={$\abs{E}$}, ylabel= {time (s)},
    width = 3.6cm,
	height = 3cm,
    xmin = 10000,
    xmax = 48000,
    ymin = 10,
    ymax = 40,
    scaled y ticks = false,
    cycle list name=yaf,
    yticklabel style={/pgf/number format/fixed},
    no markers,
]
\addplot table [x=x, y=y, col sep=comma] {san-smawk.csv};

\pgfplotsextra{\yafdrawaxis{10000}{48000}{10}{40}}
\end{axis}
\node[anchor=north east] at (0, -0.3) {(b)};
\end{tikzpicture} &
\begin{tikzpicture}
\begin{axis}[xlabel={$\abs{E}$}, ylabel= {time (s)},
    width = 3.6cm,
	height = 3cm,
    xmin = 10000,
    xmax = 46000,
    ymin = 0,
    ymax = 300,
    scaled y ticks = false,
    cycle list name=yaf,
    yticklabel style={/pgf/number format/fixed},
    no markers,
]
\addplot table [x=x, y=y, col sep=comma] {syn-algo-3.csv};
\pgfplotsextra{\yafdrawaxis{10000}{46000}{0}{300}}
\end{axis}
\node[anchor=north east] at (0, -0.3) {(c)};
\end{tikzpicture}&
\begin{tikzpicture}
\begin{axis}[xlabel={$\abs{E}$}, ylabel= {time (h)},
    width = 3.6cm,
	height = 3cm,
    xmin = 10000,
    xmax = 45000,
    ymin = .5,
    ymax = 5.5,
    scaled y ticks = false,
    cycle list name=yaf,
    yticklabel style={/pgf/number format/fixed},
    no markers,
]
\addplot table [x=x, y=y, col sep=comma] {san-naive.csv};

\pgfplotsextra{\yafdrawaxis{10000}{45000}{.5}{5.5}}
\end{axis}
\node[anchor=north east] at (0, -0.3) {(d)};
\end{tikzpicture}
\end{tabular}

\end{center}
\caption{Computational time as a function of number of temporal edges ($\abs{E}$) for
\dtname{Synthetic-large} (a,c) and \dtname{Santander-large} (b,d). This
experiment was done with  $R=3$, $K=5$, and $H=3$ using SMAWK algorithm (a--b) and naive dynamic programming (c--d).
The times are in seconds in (a--c) and in hours in (d).
}
\label{fig:edgestime}
\end{figure}

Next we study the computational time as a function of $m$,
number of edges.

We first prepared $4$ datasets with different number of edges from a real-world
dataset; \dtname{Santander-large}. To vary the number of edges, we uniformly sampled
edges without replacement. We sampled like a $.4$, $.6$,
$.8$, and $1$ fraction of edges.

Next we created $4$ different  \dtname{Synthetic-large} dataset with 30
nodes, 3 segments with unique $\lambda$ values but with different
number of edges. To do that, we  gradually increase the number of Poisson
samples we generated for each segment.

From the results in Figure~\ref{fig:edgestime} we see that generally
computational time increases as $\abs{E}$ increases. For instance, a set of
$17\,072$  edges accounts for $18.46$s whereas  a set of $34\,143$ edges accounts
for $36.36$s w.r.t \dtname{Santander-large}. Thus a linear trend w.r.t $\abs{E}$ is
evident via this experiment.

To emphasize the importance of SMAWK, we replaced it with a stock
solver of the dynamic program, and repeat the experiment. We observe
in Figure~\ref{fig:edgestime} that computational time has increased
drastically when stock dynamic program algorithm is used. For example, a set of
$34\,143$ edges required $3.7$h for \dtname{Santander-large} dataset but only
$36.36$s when SMAWK is used.

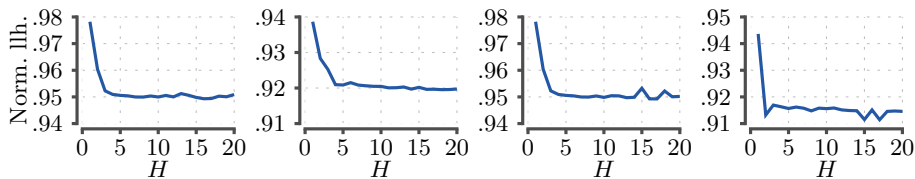
\begin{figure}[t!]
\setlength{\tabcolsep}{0pt}
\begin{center}
\begin{tabular}{llll}

\begin{tikzpicture}
\begin{axis}[xlabel={$H$}, ylabel= {Norm. llh.},
    width = 3.6cm,
	height = 3cm,
    xmin = 0,
    xmax = 20,
    ymin = .94,
    ymax = .98,
    scaled y ticks = false,
    cycle list name=yaf,
    yticklabel style={/pgf/number format/fixed, /pgf/number format/skip 0.},
    no markers,
]
\addplot table [x=x, y=y, col sep=comma] {h-like-20.csv};
\pgfplotsextra{\yafdrawaxis{0}{20}{0.94}{0.98}}
\end{axis}
\end{tikzpicture}&
\begin{tikzpicture}
\begin{axis}[xlabel={$H$}, 
    width = 3.6cm,
	height = 3cm,
    xmin = 0,
    xmax = 20,
    ymin = .91,
    ymax = .94,
    scaled y ticks = false,
    cycle list name=yaf,
    yticklabel style={/pgf/number format/fixed, /pgf/number format/skip 0.},
    no markers,
]
\addplot table [x=x, y=y1, col sep=comma] {h-like-20.csv};

\pgfplotsextra{\yafdrawaxis{0}{20}{0.91}{.94}}
\end{axis}
\end{tikzpicture} & 
\begin{tikzpicture}
\begin{axis}[xlabel={$H$}, 
    width = 3.6cm,
	height = 3cm,
    xmin = 0,
    xmax = 20,
    ymin = .94,
    ymax = .98,
    scaled y ticks = false,
    cycle list name=yaf,
    yticklabel style={/pgf/number format/fixed, /pgf/number format/skip 0.},
    no markers,
]
\addplot table [x=x, y=y2, col sep=comma] {h-like-20.csv};
\pgfplotsextra{\yafdrawaxis{0}{20}{0.94}{0.98}}
\end{axis}
\end{tikzpicture}&
\begin{tikzpicture}
\begin{axis}[xlabel={$H$}, 
    width = 3.6cm,
	height = 3cm,
    xmin = 0,
    xmax = 20,
    ymin = .91,
    ymax = .95,
    scaled y ticks = false,
    cycle list name=yaf,
    yticklabel style={/pgf/number format/fixed, /pgf/number format/skip 0.},
    no markers,
]
\addplot table [x=x, y=y3, col sep=comma] {h-like-20.csv};
\pgfplotsextra{\yafdrawaxis{0}{20}{0.91}{.95}}
\end{axis}
\end{tikzpicture}

\end{tabular}
\end{center}
\caption{Normalized log-likelihood as a function of number of levels ($H$) for the \dtname{Santander} dataset (top-left), \dtname{bitcoin} dataset (top-right), \dtname{Synthetic-5} dataset (bottom-left), and \dtname{Email-Eu-1} dataset (bottom-right). This experiment is done for  $R=2$, $K=20$, and $H=1, \ldots, 20$.}
\label{fig:h-like}
\end{figure}

\paragraph{Likelihood vs number of levels:}
 
Our next experiment is to study how normalized log-likelihood behaves upon the
choices of $H$. We conducted this experiment for  $K=20$ and vary the number of
levels~($H$) from $H=1$ to $H=20$.
The results for the \dtname{Santander}, \dtname{Bitcoin}, \dtname{Synthetic-5},
and \dtname{Email-Eu-1} dataset are shown in Figure~\ref{fig:h-like}. From the
results we see that  generally normalized log-likelihood decreases as $H$
increases. That is due to the fact that higher the $H$ levels, there exists a
higher degree of freedom in terms of optimizing the likelihood.
Note that if $H = K$, then our model corresponds to the model studied by~\citet{corneli2018multiple}.
Interestingly enough,
the log-likelihood values plateau for values of $H \ll K$ suggesting that existence
of recurring segments in the displayed datasets.

\section{Concluding remarks}\label{sec:conclusions}

In this paper we introduced a problem of finding recurrent sources in temporal
network: we introduced stochastic block model with recurrent
segments.

To find good solutions we introduced an iterative algorithm by considering 3
subproblems, where we optimize blocks, model parameters, and segmentation in
turn while keeping the remaining structures fixed.  We demonstrate how each
subproblem can be optimized in $\bigO{m}$ time. Here, the key step is to
use SMAWK algorithm for solving the segmentation. This leads to a computational
complexity of $\bigO{KHm + Rn + R^2H}$ for a single iteration.
We show experimentally that the number of iterations is low, and that the
algorithm can find the ground truth using synthetic datasets.

The paper introduces several interesting directions: \citet{gionis2003finding}
considered several approximation algorithms but they cannot be applied directly
for our problem because our optimization function is different. Adopting these
algorithms in order to obtain an approximation guarantee is an interesting
challenge.  We used a simple heuristic to optimize the groups. We chose this
approach due to its computational complexity. Experimenting with more
sophisticated but slower methods for discovering block models, such as methods
discussed in~\citep{abbe2017community}, provides a fruitful line of future
work.

\bibliographystyle{splncs04nat}
\bibliography{bibliography}

\appendix
\section{Proofs}
\begin{proof}[Proof of Proposition~\ref{prop:nphard1}]
Assume that we given an instance of \textsc{MaxCut}, that is, a static graph $H$ with $n$ nodes and $m \geq n$ edges.
Define
\[
r = 2^6(n + 1)^2\quad\text{and}\quad
\alpha = \max(2^{20} m^2, r^2)\quad.
\]
The temporal graph $G$ consists of two copies of $H$: we will denote the nodes of the
copies with $U = u_1, \ldots, u_n$ and $V = v_1, \ldots, v_n$. We connect the nodes in $U$ (and $V$) to match the edges in $H$ at timestamp $1$.
We connect the corresponding nodes $u_i$ and $v_i$ with $n\alpha$ cross edges at timestamp $0$.
We also add two sets of $r$ nodes, which we will denote by $X$ and $Y$, and connect
each node pair $(x, y)$, where $x \in X$ and $y \in Y$, with $\alpha$ edges at timestamp $0$.

We set $H = K = 1$, which forces the segmentation to be a single segment $[0, 1]$.

Set $R = 2$. Let $\mathcal{P} = \set{P_1, P_2}$ be the optimal solution, and let $\Lambda$ be its parameters.

We will prove in Lemma~\ref{lem:sym} that $X \subseteq P_1$ and $Y \subseteq P_2$ or $X \subseteq P_2$ and $Y \subseteq P_1$.
Moreover, $u_i \in P_1$ implies that $v_i \in P_2$ and $v_i \in P_1$ implies $u_i \in P_2$.

This immediately implies that $\lambda_{11} = \lambda_{22}$. Moreover, since $r > n$, we have
$\lambda_{12} > \alpha /4 > 1 \geq \lambda_{11}$. 

Let us define $C_i = U \cap P_i$ and $D_i = V \cap P_i$. Write $x$ to be the number of cross edges
between $C_1$ and $C_2$.

Let $C_1' \cup C_2' = U$ be the maximum cut, and let $D_1' \cup D_2' = V$ be the corresponding cut in $V$.
Define $\mathcal{P'} = \set{X \cup C'_1  \cup D'_2,  Y \cup C'_2  \cup D'_1}$.
Write $x'$ to be the number of cross edges between $C_1'$ and $C_2'$. By optimality $x' \geq x$.

The log-likelihood of $\mathcal{P}'$ is
\[
\begin{split}
	\lhood{\mathcal{P}'} & \geq \lhood{\mathcal{P}', \Lambda} \\
	& = (2m - 2x') \log \lambda_{11} + 2x' \log \lambda_{12} + Z \\
	& \geq (2m - 2x) \log \lambda_{11} + 2x \log \lambda_{12} + Z \\
	& = \lhood{\mathcal{P}} \geq  \lhood{\mathcal{P}'}, \\
\end{split}
\]
where
\[
	Z = \alpha(n^2 + r^2)\log \lambda_{12} - 2m - \alpha(n^2 + r^2)\quad.
\]
We have shown that $x' = x$, proving the NP-hardness of finding $\mathcal{P}$ with the optimal likelihood.
\end{proof}

\begin{lemma}
\label{lem:sym}
Let $\mathcal{P}$ be the partition as defined in the proof of Proposition~\ref{prop:nphard1}.
Then $X \subseteq P_1$ and $Y \subseteq P_2$ or $X \subseteq P_2$ and $Y \subseteq P_1$.
Moreover, $u_i \in P_1$ implies that $v_i \in P_2$ and $v_i \in P_1$ implies $u_i \in P_2$.
\end{lemma}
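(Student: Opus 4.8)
The plan is to turn the likelihood into a combinatorial quantity and then show that an optimal $\mathcal{P}$ is forced to place every ``heavy'' bundle of edges across the two blocks; this is exactly the statement of the lemma. Since $H = K = 1$, the only segment is $T = [0,1]$, of duration $1$, so for a fixed $\mathcal{P} = \set{P_1,P_2}$ the log-likelihood is $\sum_{i\le j}\cnt{P_i,P_j,T}\log\lambda_{ij} - \lambda_{ij}\abs{P_i\times P_j}$ and is maximized at $\lambda_{ij} = C_{ij}/N_{ij}$, writing $C_{ij} = \cnt{P_i,P_j,T}$ and $N_{ij} = \abs{P_i\times P_j}$. As $C_{11}+C_{12}+C_{22}$ is the (constant) total number $M$ of temporal edges, maximizing the likelihood over $\mathcal{P}$ and $\Lambda$ is equivalent to maximizing
\[
	\Phi(\mathcal{P}) = \sum_{i\le j} C_{ij}\log\frac{C_{ij}}{N_{ij}} = \sum_{e\in E}\log\lambda_{\beta(e)},
\]
where $\beta(e)$ is the unordered pair of blocks holding the endpoints of $e$ (a term with $N_{ij}=0$ is read as $0$, consistently since then $C_{ij}=0$). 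The second form will be convenient because it separates the contribution of the few edges a local modification relocates from the many it leaves in place.

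Call a bundle \emph{heavy} if it is one of the $\alpha$-fold bundles inside $X\times Y$ or one of the $n\alpha$-fold bundles between $u_i$ and $v_i$, and \emph{crossing} if its endpoints lie in different blocks. The lemma asserts precisely that every heavy bundle of an optimal $\mathcal{P}$ is crossing: if $X$ and $Y$ are not split cleanly into the two blocks some $X$--$Y$ bundle is internal, and if some $u_i, v_i$ share a block the $u_i$--$v_i$ bundle is internal. So it suffices to show that any $\mathcal{P}$ with an internal heavy bundle is suboptimal, which I would establish by a one- or two-node repair. If $X$ meets both blocks, an elementary count gives at least $r\alpha$ internal $X$--$Y$ edges; placing $X$ entirely in one block and $Y$ entirely in the other (in the orientation moving the fewest nodes) gives $\mathcal{P}'$ with no internal $X$--$Y$ bundle. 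Once $X\subseteq P_1$ and $Y\subseteq P_2$ we automatically have $r\le\abs{P_1},\abs{P_2}\le r+2n$, so $\mathcal{P}$ is near-balanced; if additionally some $u_i,v_i$ share a block, moving $v_i$ to the other block gives $\mathcal{P}'$.

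To finish one bounds $\Phi(\mathcal{P}') - \Phi(\mathcal{P})$ using the second form of $\Phi$. In either repair the relocated edges (at least $r\alpha$, respectively exactly $n\alpha$ of them) leave an internal pair for the crossing pair, whose intensity obeys $\lambda_{12}\ge r^2\alpha/(r+n)^2 > \alpha/4$ (in $\mathcal{P}'$ all $r^2$ of the $X$--$Y$ bundles are crossing and $n<r$); meanwhile an internal pair that carries both endpoints of heavy bundles must contain many nodes --- in particular $P_1\supseteq X$ forces $N_{11}\ge\binom{r}{2}$ --- so a short computation with $r = 2^6(n+1)^2\gg n$ shows its intensity is $\bigO{\alpha/r}$, giving a per-edge gain of order $\log r$ and hence $\Phi(\mathcal{P}')-\Phi(\mathcal{P}) = \Omega(\alpha\log r) - (\text{corrections})$. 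The corrections are under control: the remaining heavy edges do not lose (the repair only increases $C_{12}$, hence $\lambda_{12}$), the $\bigO{m}$ light within-copy edges lose at most $\bigO{m\log(\alpha r)}$ in total, and because $\mathcal{P}$ stays near-balanced a relocated node shifts each $\log N_{ij}$ by only $\bigO{1/r}$, so the $N_{ij}$-terms move $\Phi$ by at most $\bigO{n\alpha}$. Since $\alpha = \max(2^{20}m^2, r^2)$, the $\Omega(\alpha\log r)$ gain dominates, $\Phi(\mathcal{P}') > \Phi(\mathcal{P})$, and optimality is contradicted. Degenerate configurations --- a block of size $\le 1$, or one so unbalanced that the near-balance estimates fail --- are ruled out beforehand by comparing $\Phi(\mathcal{P})$ with $\Phi$ of the reference partition $\set{X\cup\enset{u_1}{u_n},\,Y\cup\enset{v_1}{v_n}}$.

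The delicate point --- and the reason the proof commits to such large constants --- is the feedback through $N_{12}$: the crossing pair carries $\Theta(r^2\alpha)$ edges, so even a minute change in $\log N_{12}$ can perturb $\Phi$ at the $\Theta(\alpha)$ scale, comparable to the gain from repairing a single $u_i$--$v_i$ bundle. Keeping $\mathcal{P}$ near-balanced throughout each repair --- so that a single moved node shifts $\log N_{12}$ by only $\bigO{n/r^2}$ --- and chasing the multiplicative constants is precisely what the choices $r = 2^6(n+1)^2$ and $\alpha = \max(2^{20}m^2, r^2)$ are engineered to support.
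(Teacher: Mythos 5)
Your overall shape is right and close to the paper's: the paper also rewrites the score so that each edge contributes $\log\lambda_{\beta(e)}$ and derives a contradiction by exhibiting a better partition, namely the single reference $\mathcal{P}'=\set{X\cup U,\,Y\cup V}$ (it does one global comparison with a three-term decomposition $A+B+C$ rather than your iterated local repairs). However, two of your quantitative claims are genuinely false as stated, and they are exactly where the real work lies. First, ``an internal pair that carries both endpoints of heavy bundles must contain many nodes, so its intensity is $\bigO{\alpha/r}$'' fails whenever neither block contains all of $X$ or all of $Y$: take $P_1=\set{x_1,y_1}$ with everything else in $P_2$. Then $N_{11}=1$, $C_{11}=\alpha$, so $\lambda_{11}=\alpha$, which is \emph{larger} than the post-repair crossing intensity $\lambda_{12}'\approx\alpha(r^2+n^2)/(r+n)^2<\alpha$; the relocated heavy bundle strictly loses, and the repair is profitable only because the $\sim r^2\alpha$ \emph{other} $X$--$Y$ edges rise from intensity $\approx\alpha/2$ to $\approx\alpha$ --- the opposite of your accounting, in which the non-relocated edges are a ``correction'' and the relocated ones supply an $\Omega(\log r)$ per-edge gain. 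Your degenerate-case escape hatch (blocks of size $\le 1$, or gross imbalance caught ``beforehand'') does not cover this configuration. The paper handles it by an explicit case split on $x_i=\abs{P_i\cap(X\cup Y)}$: when $x_1,x_2\ge 2n$ it can indeed bound $\lambda_{11},\lambda_{22}\le 2\alpha/3<\lambda_{12}'$, and when $x_1<2n$ it shows $a_2b_2\ge r^2/4$ so the gain on the $P_2$-internal $X$--$Y$ edges swamps a possibly negative term from $P_1$.

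Second, ``the remaining heavy edges do not lose (the repair only increases $C_{12}$, hence $\lambda_{12}$)'' ignores that $N_{12}=\abs{P_1}\abs{P_2}$ also changes when nodes move, so $\lambda_{12}=C_{12}/N_{12}$ can decrease even as $C_{12}$ grows; you flag this feedback in your closing paragraph but never resolve it. Establishing that the already-crossing edges do not lose --- in the paper, the inequality $\lambda_{12}\le\lambda_{12}'$, proved via the combinatorial claim $a_1b_2+a_2b_1+nk_{12}+1\le r^2+(z_1-r)(z_2-r)$ with several subcases on $x_1$ versus $x_2$ and $z_i$ versus $r$ --- is roughly half of the paper's proof of this lemma and is entirely absent from your plan. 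Until both points are repaired, the argument does not go through.
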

\begin{proof}
To prove the lemma we will need several counters: let us define
$a_i = \abs{P_i \cap X}$,
$b_i = \abs{P_i \cap Y}$,
$c_i = \abs{P_i \cap U}$, and
$d_i = \abs{P_i \cap V}$.
We also write $x_i = a_i + b_i$, $y_i = c_i + d_i$ and $z_i = x_i + y_i$.

Define $k_{ij}$ to be the number of cross edges between $U$ and $V$ in $(i, j)$th block of $\mathcal{P}$.
Similarly, let $m_{ij}$ to be the number of edges in $U$ and $V$ (that is, the cross edges are excluded) in $(i, j)$th block of $\mathcal{P}$.
Note that $k_{12} + k_{11} + k_{22} = n$ and $m_{12} + m_{11} + m_{22} = 2m$.

The parameters for $\mathcal{P}$ are
\[
	\lambda_{12} = \frac{\alpha a_1 b_2 + \alpha a_2 b_1 + \alpha n k_{12} + m_{12}}{z_1z_2},
\]
\[
	\lambda_{11} = \frac{\alpha a_1 b_1 + \alpha n k_{11} + m_{11}}{{z_1 \choose 2}},
\]
and
\[
	\lambda_{22} = \frac{\alpha a_2 b_2 + \alpha n k_{22} + m_{22}}{{z_2 \choose 2}}\quad.
\]

Let us define $\mathcal{P}' = \set{X \cup U, Y \cup V}$.
Note that the parameters for $\mathcal{P}'$ are equal to
\[
    \lambda_{12}' = \frac{\alpha n^2 + \alpha r^2}{(r + n)^2} 
\quad\text{and}\quad
    \lambda_{11}' = \lambda_{22}' = \frac{m}{{n + r \choose 2}}\quad. 
\]

To prove the claim we will assume that $a_1 b_1 + a_2 b_2  > 0$ or $k_{12} < n$, and
show that $\lhood{\mathcal{P}'} > \lhood{\mathcal{P}}$ which is a contradiction.

First, note that we can write the score difference as
\[
	\lhood{\mathcal{P}'} - \lhood{\mathcal{P}} = A + B + C,
\]
where
\[
	A = \alpha(a_1 b_2 + a_2 b_1 + n k_{12}) \log \frac{\lambda_{12}'}{\lambda_{12}},
\]

\[
	B = \alpha (a_1b_1 + n k_{11}) \log \frac{\lambda_{12}'}{\lambda_{11}} + \alpha (a_2b_2 + n k_{22}) \log \frac{\lambda_{12}'}{\lambda_{22}},
\]
and
\begin{equation}
\label{eq:partc}
	C = m_{11} \log \frac{\lambda_{11}'}{\lambda_{11}} + m_{22} \log \frac{\lambda_{11}'}{\lambda_{22}} + m_{12} \log \frac{\lambda_{11}'}{\lambda_{12}}\quad.
\end{equation}

We claim that 
\[
	A \geq 0,\quad 
	B > \alpha 2^{-7},
	\quad\text{and}\quad
	C \geq 2m \log \frac{1}{4\alpha n r^2}\quad.
\]

This proves the lemma since
\[
\begin{split}
	B + C & > \alpha 2^{-7} - 2m \log 4 \alpha n r^2 \\
	& \geq \alpha 2^{-7} - 8m \log \alpha \\
	& \geq \alpha 2^{-7} - 8m \sqrt{\alpha} \\
	& = \sqrt{\alpha} (\sqrt{\alpha}2^{-7} - 8m ) \geq 0\quad.
\end{split}
\]

We will first bound $C$.
Since we may have at most $\alpha n$ edges per node pair, we have $\lambda_{11}, \lambda_{12}, \lambda_{22} \leq \alpha n$.
Moreover, since $r > n$, we have $\lambda_{11}' \geq (r + n)^{-2} \geq r^{-2}/4$. The bound follows from Eq.~\ref{eq:partc}.

Next we will bound $B$.
Assume that $x_1, x_2 \geq 2n$.
Our next step is to upper bound $\lambda_{11}$ and $\lambda_{22}$.
In order to do this, first note that since $m_{11} \leq \alpha / 2$ and $k_{11} \leq y_1$, we have
\[
	\frac{m_{11} + \alpha n k_{11}}{{y_1 \choose 2} + y_1 x_1} \leq \alpha \frac{1/2 + n y_1 }{{y_1 \choose 2} + 2n y_1 } \leq \alpha \frac{1/2 + n y_1 }{1 + 2n y_1 } = \alpha / 2\quad.
\]
In addition, since $x_1 \geq 3$, we have
\[
	\alpha \frac{a_1b_1}{{x_1 \choose 2}} \leq 2\alpha / 3\quad.
\]
We can combine the two bounds, leading to
\[
\begin{split}
	\lambda_{11} & = \frac{\alpha a_1b_1 + m_{11} + \alpha n k_{11}}{{z_1 \choose 2}} 
	 = \frac{a_1b_1 + m_{11} + \alpha n k_{11}}{{x_1 \choose 2} + {y_1 \choose 2} + y_1 x_1} \leq 2\alpha/3\quad.
\end{split}
\]
The same bound holds for $\lambda_{22}$.

Since $r \geq 4n$, we have $\lambda_{12}' \geq 17\alpha/25$. Thus,
\[
	B \geq \alpha (a_1b_1 + a_2b_2 + nk_{11} + nk_{22}) \log \frac{17 \times 3}{25 \times 2} > \alpha 2^{-7}\quad.
\]

Assume now that $x_1 < 2n$. Then $a_2 + b_2 = x_2 > 2r - 2n > 1.5 r$. Since $a_2, b_2 \leq r$, we must have $a_2, b_2 \geq r/2$.
Moreover, using the previous arguments, we have
\[
\log \frac{\lambda_{12}'}{\lambda_{22}} \geq 2^{-7}
\quad\text{and}\quad
\log \frac{\lambda_{12}'}{\lambda_{11}} \geq \log \frac{17}{25n}
\quad.
\]
Consequently,
\[
\begin{split}
	B & =
	\alpha (a_2b_2 + nk_{22}) \log \frac{\lambda_{12}'}{\lambda_{22}} 
	+ \alpha (a_1b_1 + nk_{11}) \log \frac{\lambda_{12}'}{\lambda_{11}} \\
	& \geq \alpha \pr{\frac{r^2}{2^9} - 4 n^2 \log \frac{25 n}{17}} \\
	& \geq \alpha (8(n + 1)^4 - 4 n^2 \log 2 n) \\
	& \geq \alpha (8(n + 1)^4 - 8 n^{4}) \geq \alpha > \alpha 2^{-7}\quad.
\end{split}
\]

Finally we will bound $A$ by showing that $\lambda_{12} \leq \lambda'_{12}$.
Assume for simplicity that $x_1 \leq x_2$.
Let us define
\[
	N = a_1 b_2 + a_2 b_1 + 1 + nk_{12}\quad.
\]
Assume that $z_1, z_2 \geq r$.
We claim that $N \leq r^2 + (z_1 - r)(z_2 - r)$, which leads to
\[
	\lambda_{12} \leq \alpha\frac{N}{z_1z_2} \leq \alpha\frac{r^2 + (z_1 - r)(z_2 - r)}{z_1z_2} = \alpha - \alpha \frac{2nr}{z_1z_2}\quad.
\]
Here the first inequality holds since $m_{12} \leq 2m \leq \alpha$.
The right hand side achieves its maximum when $z_1z_2$ is maximized, that is, $z_1
= z_2 = n + r$. In such case, the upper bound is equal to $\lambda_{12}'$.

To prove the claim, first note that
\begin{equation}
\label{eq:sep}
	nk_{12} \leq \max(y_1, y_2) \min(y_1, y_2) = y_1y_2
\end{equation}
with the equality holding if and only if $k_{12} = y_1 = y_2 = n$.

Assume that $x_1 = x_2 = r$.
If $a_1b_1 + a_2b_2 > 0$, then
\[
\begin{split}
	N & = r^2 - a_1b_1 + a_2b_2 + 1  + nk_{12} 
	 \leq r^2  + y_1y_2 = r^2 + (z_2 - r)(z_1 - r)\quad.
\end{split}
\]
Assume $a_1b_1 + a_2b_2 = 0$.
Then $k_{12} < n$, and the inequality is strict in Eq.~\ref{eq:sep}. Consequently, 
\[
\begin{split}
	N & = r^2 + n k_{12} + 1 
	\leq r^2 + y_1y_2 = r^2 + (z_2 - r)(z_1 - r) \quad .\\
\end{split}
\]

Assume now that $x_1 \leq x_2 - 1$. Since a node in $(X \cup Y) \cap P_1$ is connected to $r$ nodes, we must have
$a_1 b_2 + a_2 b_1 \leq x_1r$. Due to the assumption,
$x_1 \leq r - 1$ and $x_2 \geq r + 1$, which leads to 
\[
\begin{split}
	k_{12} & \leq \min(y_1, y_2) \\
	& = \min(z_1 - x_1, z_2 - x_2) \\
	& \leq \min(z_1 - x_1, z_2 - r)
\end{split}
\]
and
\[
\begin{split}
	N & \leq x_1 r + 1 + n \min(z_1 - x_1, z_2 - r) \\
	& \leq (r - 1)r + 1 + n \min(z_1 - r + 1, z_2 - r) \\
	& \leq r^2 + (1 + n - r) + n \min(z_1 - r, z_2 - r) \\
	& \leq r^2 + \max(z_1 - r, z_2 - r)\min(z_1 - r, z_2 - r) \\
	& = r^2 + (z_2 - r)(z_1 - r)\quad.
\end{split}
\]

As a final case assume that $z_1 < r$.
If $x_1 = z_1$, then
\[
\begin{split}
	\lambda_{12} & \leq \alpha \frac{rz_1}{z_1 z_2} \leq \alpha \frac{r}{r + 2n} 
	 = \alpha \frac{r^2}{r^2 + 2nr} \leq  \alpha \frac{r^2 + n^2}{r^2 + 2nr + n^2} = \lambda_{12}'\quad.
\end{split}
\]
If $x_1 < z_1$, then
\[
	N \leq x_1r + 1 + n (z_1 -  x_1) \leq z_1r
\]
and again
\[
	\lambda_{12} \leq \alpha \frac{N}{z_1z_2} \leq \frac{zr_1}{z_1z_2} \leq \lambda_{12}'\quad.
\]
The case for $z_2 < r$ is symmetrical.

We have now proven our claim, and thus proved that $\lambda_{12} \leq \lambda_{12}'$. Consequently, $A \geq 0$.
\end{proof}

\begin{proof}[Proof of Proposition~\ref{prop:nphard2}]
To prove \np-hardness we will reduce the \textsc{MaxCut} problem, where
we are asked to partition graph into $2$ subgraphs and maximize cross-edges.

Assume that we given a static graph $H$. We will use $H$ as our temporal graph
$G$ by setting the edges to the same timestamp, say $t$. We also set $H = K = 1$, and
use $R = 2$ groups. We also set the segmentation $\mathcal{T} = {[t, t]}$.

Select two values $\alpha < \beta$ and set the parameters
$\lambda_{11} = \lambda_{22} = \alpha$ and $\lambda_{12} = \beta$.

Let $P_1, P_2$ be a partition of the nodes and let $x$ be the number
of the inner edges, that is, edges $(u, v, t)$ with $u, v \in P_1$ or $u, v \in P_2$.
Note that $m - x$ is the number of cross edges.

The log-likelihood is then equal to
\[
    \lhood{\mathcal{P}, \mathcal{T}, g, \Lambda} =
    x \log \alpha + (m - x) \log \beta,
\]
which is maximized when $m - x$ is maximized since $\beta > \alpha$.
Since $m - x$ is the number of cross-edges, this completes the proof.
\end{proof}

\begin{proof}[Proof of Proposition~\ref{prop:gain}]
Let us write $\mathcal{Q}$ to be the partition obtained from $\mathcal{P}$ by
deleting $u$, that is $Q_b = P_b \setminus \set{u}$ and $Q_j = P_j$ for $j \neq b$.
Note that
\[
    \lhood{P_a', P_j', T, \lambda} - \lhood{Q_a, Q_j, T, \lambda} = \lhood{u, P_j', T, \lambda}
\]
for any $T$ and $\lambda$. Moreover, $\lhood{P_i', P_j', T, \lambda} = \lhood{Q_i, Q_j, T, \lambda}$
if $i, j \neq a$.
The score difference is equal to
\[
\begin{split}
 & \lhood{\mathcal{P}', \mathcal{T}, g, \Lambda} - \lhood{\mathcal{Q}, \mathcal{T}, g, \Lambda} \\
 & \  = \sum_{i \leq j} \sum_{k = 1}^K \lhood{P_i', P_j', T_k, \lambda_{ijg(k)}} - \lhood{Q_i, Q_j, T_k, \lambda_{ijg(k)}} \\
 & \  = \sum_{j = 1}^R \sum_{k = 1}^K \lhood{P_a', P_j', T_k, \lambda_{ajg(k)}} - \lhood{Q_a, Q_j, T_k, \lambda_{ajg(k)}} \\
 & \  = \sum_{j = 1}^R \sum_{k = 1}^K \lhood{u, P_j', T_k, \lambda_{ajg(k)}}  \\
 & \  = \sum_{j = 1}^R \sum_{h = 1}^H \lhood{u, P_j', \mathcal{S}_h, \lambda_{ajh}} \\
 & \  = \sum_{j = 1}^R \sum_{h = 1}^H \cnt{u, P_j', \mathcal{S}_h} \log  \lambda_{ajh} - t_h \abs{Q_j} \lambda_{ajh} \\
 & \  = \sum_{j = 1}^R \sum_{h = 1}^H c_{jh} \log  \lambda_{ajh} - t_h \abs{Q_j} \lambda_{ajh} \\
 & \  = \sum_{h = 1}^H \lambda_{bah} t_h + \sum_{j = 1}^r c_{jh} \log  \lambda_{ajh} - t_h \abs{P_j} \lambda_{ajh}\quad. \\
\end{split}
\]
Here we used the fact that $\cnt{u, P_j, \mathcal{S}_h} = \cnt{u, P_j', \mathcal{S}_h}$. The claim follows by
setting $Z = \lhood{\mathcal{Q}, \mathcal{T}, g, \Lambda} - \lhood{\mathcal{P}, \mathcal{T}, g, \Lambda}$.
\end{proof}

\begin{proof}[Proof of Proposition~\ref{prop:nphard3}]
Assume that we given an instance of \textsc{3-Matching}, that is, a domain $X$
of size $n$, where $n$ is divisible by 3, and a collection $\fm{S}$ of $m$ sets such that $S \subseteq X$ and $\abs{S} = 3$ for each $S \in \fm{S}$.
The problem whether there is a disjoint subcollection in $\fm{S}$ covering $X$
is known to be \np-complete.

Let $\fm{T} = \set{S \subseteq X \mid \abs{S} = 3, S \notin \fm{S}}$ be the complement collection of $\fm{S}$.
For each $i \leq j \leq m$, define $c_{ij}$
to be the number of sets in $S$ containing $i$ and $j$,
\[
	c_{ij} = \abs{\set{S \in \fm{S} \mid \set{i, j} \subset S}}\quad.
\]

To construct the dynamic graph $G$ we will use 5 sets of nodes, namely $\set{u}$, $A$, $B$, $C$, $D$. 
The first set consists only of one node $u$. Every edge will be adjacent to $u$.
The second set $A$ contains as many nodes as there are sets in $\fm{T}$.
For each $i \in T_j \in \fm{T}$, we add an edge $(u, a_j)$ at timestamp $i$.
The third set $B$ contains $\sum_{i < j} c_{ij}$ nodes
which we divide further into $n(n - 1)/2$ sets $B_{ij}$ with $\abs{B_{ij}} = c_{ij}$.
For each $i < j$ we connect
nodes in $B_{ij}$ with $u$ at timestamp $i$ and at timestamp $j$.
The fourth set $C$ contains $n(m - c_{ii})$ nodes
which we divide further into $n$ sets $C_{i}$ with $\abs{C_{i}} = m - c_{ii}$.
For each $i \leq n$ we connect
nodes in $C_i$ with $u$ at timestamp $i$.
The fifth set $D$ contains $nw$ nodes, where $w = 24n^6$,
which we divide further into $n$ sets $D_{i}$ with $\abs{D_{i}} = w$.
For each $i \leq n$ we connect
$w$ nodes with $u$ at timestamp $i$.

We will set $K = n$ and $H = n / 3$. We set $R$ to be the number of nodes and set the partition $\fm{P}$
to be the partition where each node is contained in its own block. We require for a segmentation to start from $0$.
Since there are only $n$ timestamps, the segmentation consists of $n$ segments of form $(i - 1, i]$ or $[0, 1]$.

Let $g$ be the optimal grouping of segments and let $\Lambda$ be its
parameters.  We first claim that $g$ groups timestamps into groups of 3. To
prove this assume that there is a group of size $y = 1,2$. Then there is
another group with a size of $x \geq 6 - y$.  Let $g'$ be a mapping where we
move $3 - y$ timestamps from the larger group to the smaller group and let $\Lambda'$ be the new optimal parameters.

The number of edges adjacent to $A$, $B$, and $C$ can be bound by 
$3n^3 +2n^2m + nm \leq 6n^5$. Moreover, the non-zero parameters can be bound by 
$\lambda' \geq 1/n$ and $\lambda \leq 1$. Consequently, the score difference 
can be bound by
\[
	\lhood{g'} - \lhood{g} \geq 6n^5(\log 1/n - \log 1) + Z(x) \geq 6n^6 + Z(x),
\]
where $Z(x)$ is equal to
\[
	w ((x - 3 + y)\log \frac{1}{x - 3 + y} + 3 \log \frac{1}{3} - x\log \frac{1}{x} - y \log \frac{1}{y})\ .
\]

The derivative of $Z(x)$ with respect to $x$ is equal to $\log(x / (x - 3 + y)) > 0$, that is, $Z(x)$ is at smallest when $x = 6 - y$.
A direct calculation shows that $Z(x)$ is the smallest when $y = 2$, leading to
\[
\begin{split}
	\lhood{g'} - \lhood{g} & \geq -6n^6 + w (6 \log \frac{1}{3} - 4\log \frac{1}{4} - 2 \log \frac{1}{2}) \\
	& > -6n^6 + w/4  = 0\quad.\\
\end{split}
\]
In summary, $\lhood{g'} > \lhood{g}$ which is a contradiction. Thus, $g$ groups
of segments to size of at least $3$.  Since $K = 3H$, the groups are exactly of
size 3.

Our next step is to calculate the impact of a single group to the score. In order to do that
first note that for any $i < j$ there are
\[
	{n \choose 3} - c_{ij} + c_{ij} = {n \choose 3}
\]
edges joining the same nodes at timestamp $i$ and timestamp $j$.
Similarly, there are
\[
	{n \choose 3} - c_{ii} + (\sum_{i < j} c_{ij}) + m - c_{ii} + w = {n \choose 3} + m + w
\]
edges adjacent to node $i$.

Consider a set $S$ of size $3$ induced by $g$. Assume that $S \notin \fm{T}$.
Then there are $3{n \choose 3}$ parameters in $\Lambda$ assosiated with the group with value $2/3$,
and $3(m + w)$ parameters with value $1/3$. The remaining parameters are 0. Consequently, the impact to the score is equivalent to
\[
	\alpha = 6 {n \choose 3} \log 2/3 + 3(m + w) \log 1/3\quad.
\]
Assume that $S \in \fm{T}$. Then using exclusion-inclusion principle,
there are $3{n \choose 3} - 3$ parameters with value $2/3$,
$3(m + w) + 3$ parameters with value $1/3$, one parameter with value $3/3$
and the remaining parameters are 0. Consequently, the impact to the score is equivalent to
\[
	\beta = 6{n \choose 3} \log 2/3 - 6\log 2/3 + 3(1 + m + w) \log 1/3\quad.
\]
We immediately see that
\[
	\alpha - \beta = 6 \log 2/3 - 3 \log 1/3 = 3 \log 4/3 > 0\quad.
\]
Let $k$ be the number of groups induced by $g$ that are in $\fm{S}$. Then
the score is equal to 
\[
	\lhood{g} = k \alpha + (H - k) \beta - \abs{E(G)}.
\]
Since $\alpha > \beta$, there is a disjoint subcollection in $\fm{S}$ covering $X$
if and only if $\lhood{g} = H \alpha - \abs{E(G)}$.
\end{proof}

\begin{proof}[of Proposition~\ref{prop:monotone}]
Assume four
edges $s_1$, $s_2$, $e_1$ and $e_2$ with $t(s_1) \leq t(s_2)$
and $t(e_1) \leq t(e_2)$. We can safely assume that $t(s_2) \leq t(e_1)$.
We can write the difference as
\[
    x(s_1, e_2) - x(s_1, e_1) = y(e_1, e_2; h) =
    x(s_2, e_2) - x(s_2, e_1)
    \ .
\]
Thus, if $x(s_2, e_1) > x(s_1, e_1)$, then $x(s_2, e_2) > x(s_1, e_2)$,
completing the proof.
\end{proof}

\newpage

\section{Detailed pseudo-code for \findsegments}

\begin{algorithm}[ht!]

\caption{Algorithm $\findsegments(\mathcal{P}, \Lambda)$ for finding optimal segmentation for fixed groups $\mathcal{P}$ and parameters $\Lambda$}

$p(v) \define $ group index of $v$\;
$f[e, h] \define 0$ for each $e \in E$ and $h = 1,\ldots,H$\;
$t_{\mathit{min}} \define \min \set{t \mid (u, v, t) \in E}$\;

\ForEach {$h = 1,\ldots,H$} {
	$\alpha \define \sum_{i \leq j} \abs{P_i \times P_j}\lambda_{ijh}$\;
	$\beta \define 0$\;
	\ForEach {$e = (u, v, t) \in E$ in chrono. order} {
		$i \define p(u)$\;
		$j \define p(v)$\;
		$\beta \define \beta + \log \lambda_{ijh}$\;
		$f[e, h] \define \beta - \alpha (t - t_{\mathit{min}})$\;
	}
}

\ForEach {$e \in E$ and $k = 1,\ldots,K$} {
	$o[e, k] \define r[e, k] \define q[e, k] \define 0$\;
}
\ForEach {$e \in E$} {
	$o[e, 1] \define \max_h f[e, h]$\;
	$r[e, 1] \define \arg\max_h f[e, h]$\;
}

\ForEach {$k = 2, \ldots, K$} {
	$z[e, h] \define 0$ for each $e \in E$ and $h = 1,\ldots,H$\;
	$x(s, e; h) \define o[s, k - 1] + f[e, h] - f[s, h]$\;
	\ForEach {$h = 1, \ldots, H$} {
		$z[e, h] \define \arg \max_s x(s, e; h)$ for each $e \in E$ use SMAWK\;
	}
	$r[e, k] \define \arg\max_h x(z[e, h], e; h)$\;
	$o[e, k] \define \max_h x(z[e, h], e; h)$\;
	$q[e, k] \define z[e, r[e, k]]$\;
}

$e \define $ last edge in $E$\;

\ForEach {$k = K, \ldots, 2$} {
	$s \define q[e, k]$\;
	$T_k \define (t(s), t(e)]$\;
	$g(k) \define r[e, k]$\;
	$e \define s$\;
}
$T_1 \define [t_{min}, t(e)]$\;
$g(1) \define r[e, 1]$\;

\Return $(T_1, \ldots, T_K)$, $g$\;

\end{algorithm}

\end{document}